\newtheorem{lemma}{Lemma}
\newtheorem{theorem}{Theorem}
\newtheorem{corollary}{Corollary}
\newcommand{\eref}[1]{(\ref{#1})}
\newcommand{\sref}[1]{Section~\ref{#1}}
\newcommand{\cref}[1]{Constraint~\ref{#1}}
\newcommand{\ignore}[1]{}
\newcommand{\nosemic}{\renewcommand{\@endalgocfline}{\relax}}% Drop semi-colon ;
\newcommand{\dosemic}{\renewcommand{\@endalgocfline}{\algocf@endline}}% Reinstate semi-colon ;
\let\oldnl\nl% Store \nl in \oldnl
\newcommand{\nonl}{\renewcommand{\nl}{\let\nl\oldnl}}% Remove line number for one line
\begin{document}
	\IEEEoverridecommandlockouts
	%\begin{spacing}{1.44}
	\title{\vspace{-.9cm}  %Federated Learning via D2D Communications and Overlapped Clustering for Energy-Efficient Communication Systems
	Decentralized Aggregation for Energy-Efficient Federated Learning via Overlapped Clustering and D2D Communications}

	\setlength{\columnsep}{0.21 in}

	\author{
		\IEEEauthorblockN{Mohammed S. Al-Abiad, \textit{Member, IEEE}, Mohanad Obeed, Md. Jahangir Hossain, \textit{Senior Member, IEEE}, and Anas Chaaban, \textit{Senior Member, IEEE}} 
		
		\thanks {
			The authors are with the School of
			Engineering, the University of British Columbia, Kelowna, BC V1V 1V7, Canada
			(e-mail: m.saif@alumni.ubc.ca, mohanad.obeed@ubc.ca, jahangir.hossain@ubc.ca, anas.chaaban@ubc.ca).

		}
		%\vspace{-0.9cm}
	}
	
	\maketitle
	
	\begin{abstract}
	
Federated learning (FL) has emerged as a distributed
machine learning (ML) technique to train models without sharing users' private data. In this paper, we propose a decentralized FL
scheme that is called \underline{f}ederated \underline{l}earning \underline{e}mpowered  \underline{o}verlapped \underline{c}lustering for \underline{d}ecentralized aggregation (FL-EOCD). The introduced FL-EOCD leverages device-to-device (D2D) communications and overlapped clustering to enable decentralized aggregation, where a cluster is defined as a coverage zone of a typical device. The devices located on the overlapped clusters are called bridge devices (BDs). In the proposed FL-EOCD scheme, a clustering topology is envisioned where clusters are connected through BDs, so as the aggregated models of each cluster is disseminated to the other clusters in a decentralized manner without the need for a global aggregator or an additional hop of transmission.
Unlike the star-based FL, the proposed FL-EOCD scheme involves a large number of local devices by reusing the RRBs in different non-adjacent clusters. To evaluate our proposed FL-EOCD scheme as opposed to baseline FL schemes, we consider minimizing the overall energy-consumption of devices while maintaining the convergence rate of FL subject to its time constraint. To this end, a joint optimization problem, considering scheduling the local devices/BDs to the CHs and computation frequency allocation, is formulated, where an iterative solution to this joint problem is devised. Extensive simulations are conducted to verify the effectiveness of the proposed FL-EOCD algorithm over FL conventional schemes in terms of energy consumption, latency, and convergence rate.

\end{abstract}
\begin{IEEEkeywords}
	
Device-to-device (D2D) communications, decentralized and low-latency federated learning, energy consumption, overlapped clustering. 
\end{IEEEkeywords}

	%\vspace{-0.55cm}
\section{Introduction}
%	\vspace{-0.1cm}
Machine learning (ML) techniques have been leveraged to provide various services, such as virtual reality, traffic prediction, and object recognition \cite{ML1, ML2, ML3}. The ML model training is usually implemented at the base station (BS) in a centralized manner, where the raw data is required to be collected at the BS for training coexist. However, in many applications, datasets are generated at edge devices, and consequently, transmitting raw data from all devices to BS incurs huge network traffic and devices data privacy leakage \cite{ML4}. Therefore, the traditional centralized ML techniques are not preferable for next generation  wireless networks \cite{Distributed_learning}. Federated learning (FL) has emerged as a promising distributed machine learning (ML) algorithms  for addressing the aforementioned bandwidth and privacy challenges \cite{FL2, FL3}. 
%FL is a distributed learning approach, where several models are trained in a distributive and collaborative manner at devices using their local data and sent to a FL server to aggregate the models.

\subsection{Federated Learning at the Network Edge}
	
The popular FL conventional architectures are: (i) star-based FL architecture \cite{C1, C2, C3, C4, C5, Ansari1} and (ii) relay-assisted/dual-hop FL \cite{Dual1, Dual2, Dual3, Dual4}, also called hierarchical FL. In each iteration of the star-based FL, each device trains a local model based on its own dataset and then transmits its local trained model to the BS. The BS then aggregates all the received models into one global model and broadcasts it back to the devices. These procedures are repeated several rounds until the global model converge. Similar to the star-based FL, the hierarchical FL requires the BS to aggregate all the local models of the devices, which are collected by  cluster-heads (CHs) (or relays). In particular, two-hop transmission is required in the hierarchical FL, i.e., the CHs/relays need to forward the local models of devices to the BS. Despite the fact that hierarchical FL can improve the limited connectivity of devices and reduce their energy consumption, its main limitation, as previously mentioned, is that it requires two-hop transmission. However, these FL conventional architectures pose prime concerns for the wireless edge since the devices may exhibit: (i) significant heterogeneity
in their computational resources (e.g., limited battery and CPU computations) \cite{fog}; (ii) varying proximity to the BS (e.g., varying distances from devices to the BS in a cell). As a consequence, a significant amount of energy is consumed to bounce the model between the BS and the devices \cite{C1}. This consumes the power of the battery-powered devices and the available resources at the BS that may be used for another set of devices. %This would make  battery-powered devices located at the network edge unsatisfied to transmit their local models to the distant BS. Moreover, such centralized approaches consume a lot of BS's resources and limit its ability to provide computational-intensive and time-sensitive mobile services to a different set of devices. %Despite the fact that hierarchical FL can improve the limited connectivity of devices and reduce their energy consumption, its main limitation, as previously mentioned, is that the models need to be transmitted over two hops. This leads to an increase in the communication cost and the delay. 
%With the recent progress of FL, new FL architectures were migrated from the aforementioned topology of conventional FL to mitigate the cost of uplink and downlink transmissions, e.g., \cite{fog, new1, new2}. However, these FL architectures still require a BS for global aggregations.

Towards energy-efficient FL while offloading the BS for performing global aggregations, we consider a new angle for developing resource-efficient and decentralized FL using device-to-device (D2D) communications \cite{D2D1}. D2D communications has been demonstrated to be a promising technology in 5G and beyond, such as fog computing and IoT systems \cite{D2D2, D2D3, D2D4, D2Dnew, D1, D2}; indeed, it is expected that $50\%$ of all network connections will be machine-to-machine by 2023 \cite{D2D5}. For example, D2D communications has been shown to be promising for significantly improving the energy efficiency, the reliability, and the data rate
\cite{D2D2, D2D3, D2D4, D2Dnew, D1, D2}. Unlike the hierarchical FL that can also use D2D clustering, we propose to overlap the clusters such that there is no need to have an additional level of aggregation at the BS, which may consume more energy. Therefore, it is imperative to exploit D2D communications and introduce a novel decentralized FL scheme for tackling the aforementioned issues of the FL conventional schemes.

Recently, a hybrid FL scheme that integrates the star-based topology with D2D communications, called semi-decentralized FL, was designed in \cite{semi}. The authors exploited D2D communications to propose  a consensus mechanism to mitigate model divergence via low-power communications among nearby devices. However, the main limitation of the semi-decentralized FL in \cite{semi} is that it uses a global aggregator and two-time scale FL (i.e., D2D communications and device-to-BS wireless transmissions). Different from the work in \cite{semi}, we design a novel decentralized FL scheme  that clusters  devices in a way that the models of each cluster is implicitly shared to the other clusters without the need for additional stage of aggregation at the BS.  To this end, we propose a novel decentralized FL and resource allocation D2D paradigm to minimize the overall energy consumption of a delay-constrained FL in a partially connected D2D system. In the envisioned system, each single-antenna device has a limited coverage zone, named cluster, and dynamically works as: (i) a local device; (ii) a bridge device (BD); or (iii) a CH that can aggregate the models of the associated local devices and BD in its cluster. %Each CH is granted a number of orthogonal radio resource blocks (RRBs), e.g., a group of OFDMA sub-carriers \cite{RRB1, RRB2}. Through these granted RRBs, the local devices upload their trained local models to the scheduled CHs for local cluster aggregations. %Besides their roles for performing local learning on their dataset, the BDs ensure the connectivity among CHs to exchange the received aggregated models between adjacent CHs, which will be used for the next round of iterations. Therefore, through the BDs, the local aggregated model of each CH propagates to other clusters without any additional communication cost or more consumed energy. 

%\textcolor{blue}{In this work, we propose a novel and decentralized FL and resource allocation D2D paradigm to minimize the overall energy consumption of a delay-constrained FL in the partially connected D2D-aided BS (PC-D2D-aided BS) system. In the envisioned system, each single-antenna device has limited coverage zone, named cluster, and works as a local learning device, a bridge device (BD), or a CH that is equipped with number of orthogonal radio resource blocks (RRBs), e.g., a group of OFDMA sub-carrier. The local learning devices train models on their datasets and transmit their local parameters to the scheduled CHs over RRBs for global model aggregation. The BDs, on the other hand, exchange the updated aggregated models between CHs. Therefore, through the BDs, the local aggregated model of each CH propagates to other clusters without any communication cost or more consumed energy.} \textcolor{red}{Therefore, we propose to cluster the devices in a way that the models of each cluster is implicitly shared to the other clusters with the help of BDs. Unlike the star-based FL, we propose to cluster the devices to involve a large number of devices and get the devices closer to the CH so the consumed energy is low. Unlike the hierarchical FL, we propose to overlap the clusters so that there is no need to have an additional level of aggregation that consumes more energy and increases the learning time.}

	%\vspace{-0.4cm}

\subsection{Related Works and Challenges}
	%\vspace{-0.1cm}
Since the devices are battery-driven, for a sustainable operation of an FL framework in  latency-sensitive applications, it is crucial to reduce the energy consumption of the edge devices and the FL time that consists of computation and communication latency. Therefore, we will focus on works addressing resource
efficiency and latency-constrained FL,
which is the main focus of this paper. Specifically, star-based and hierarchical latency-constrained FL schemes will be reviewed in this section and evaluated in the numerical results section. For a comprehensive survey of the FL literature, we refer the reader to e.g., \cite{Survey1}.

Several works extensively considered star-based FL from different aspects, such as frequency computation, wireless communications, and radio resource management, e.g., \cite{C1, C2, C3, C4, C5, Ansari1}. Another set of
works investigated the impact of wireless link conditions,  number of collaborating edge devices, and computation frequency allocation on the performance of star-based FL \cite{C3, C4, C5, Ansari1}. However, these works mainly relied on the BS for global aggregations. To
reduce the demand for global aggregations, a set of works considered a hierarchical system with two-hop transmission, e.g., \cite{Dual1, Dual2, Dual3, Dual4}, \cite{New}. For example, the authors in \cite{New} developed a hierarchical FL model where edge servers perform partial global aggregations. As mentioned before, such a hierarchical FL model leads to increasing the energy consumption and delay.

For developing energy-efficient FL, several works considered minimizing communication and computation energy, e.g., \cite{EE_FL_1, EE_FL_2, EE_FL_3, FCR11}. In \cite{EE_FL_1}, the authors proposed an energy-efficient radio resource allocation for delay constrained FL, where they minimized  the communication energy and ignored the computation energy. In \cite{EE_FL_3}, radio resource allocation was developed to minimize both the communication and computation energy in an FL system subject to delay constraints. The authors in \cite{FCR11}  proposed joint transmit power and computation frequency allocation to reduce the overall energy consumption of an FL in a fog-aided internet of things (IoT) network. Recently, a joint optimization framework considering device scheduling, transmit power allocation, and device's computation frequency allocation was proposed to simultaneously minimize the total energy consumption and maximize the number of the scheduled devices \cite{EE_FL_NOMA}. However, an energy-efficient framework, while
jointly considering communication and computation energy, for facilitating FL for network edge devices in partially connected D2D networks was not investigated in the existing literature.

Different from the aforementioned works, our goal is to propose a scheme,  called \underline{f}ederated \underline{l}earning \underline{e}mpowered \underline{o}verlapped \underline{c}lustering for \underline{d}ecentralized aggregation (FL-EOCD), that: (i) involves a decentralized FL scheme without a global aggregation at the BS; (ii) incorporates both local model training  and CHs' aggregated models dissemination among the CHs through the BDs; and (iii) develops an innovative clustering method leading to a computationally efficient solution. However, the proposed FL-EOCD scheme exhibits the following two challenges.
\begin{enumerate}
\item \textbf{Challenge I:} Unlike the aforementioned FL conventional systems where the optimization is performed only over the device scheduling and computation capability of collaborating devices, the optimization in the considered D2D environment also considers the selection of the set of CHs to achieve the
best FL network performance. Therefore, the performance of FL (i.e., accuracy and convergence) jointly depends on the selection of CHs, scheduled devices, RRB resource allocation, and computation frequency allocation of the scheduled devices. %In particular, the channel impairments, such as fading and interference can affect the FL accuracy; indeed a set of scheduled devices who are  far away from the selected CHs can affect the convergence of FL. 
Therefore, to fully capitalize the advantage of D2D communications, we must carefully consider a joint optimization of the CHs selection, device and RRB scheduling, and computation frequency allocation of the devices.
\item \textbf{Challenge II:} In practice, there is a trade-off	between the FL time (i.e., computation time and wireless transmission latency) and edge device energy consumption; since reducing the FL time
requires more energy. However, the computation time of devices is affected by the allocated computing resources, and increasing the computation frequency of the devices can reduce the	latency at the cost of the increased energy consumption. Furthermore, the wireless data transmission latency is determined by the wireless transmission rate, 	which is related to wireless transmission power. To reduce the wireless communication latency, we can increase the transmission power, however this will consume more energy. Accordingly, a joint		optimization of the degrees-of-freedom, namely, CHs selection, device scheduling to the CHs/RRBs, and computation frequency allocation, is required
to satisfy a given FL time constraint and reduce energy consumption. %Therefore,	we propose
%a novel optimization scheme capturing the highly complex interplays between energy consumption, wireless communication latency, and  computation frequency allocation while satisfying the FL time requirements. %To this end, as we will show, our novel approach has promising results in terms of energy consumption and  FL time (consisting of computation and communication latencies)  as opposed to the conventional FL architectures. 
\end{enumerate}

%	\vspace{-0.4cm}
\subsection{Contributions}
%	\vspace{-0.1cm}
%In this paper, we introduce an FL-EOCD scheme by leveraging D2D communications and overlapped connectivity between the devices. To optimize the proposed scheme, we formulate and solve the problem of device-CH/RRB scheduling and computation frequency allocation in the partially connected D2D network. To this end, we develop an innovative clustering method leading to a computationally efficient solution. 
The specific contributions of this work are summarized as follows.
\begin{itemize}
\item We introduce a decentralized FL scheme, called FL-EOCD, which augments the conventional star-based and hierarchical FL architectures with overlapped clustering that enables  global models dissemination without the need for a central aggregation at the BS. In FL-EOCD, the local models are trained at the 	devices and transmitted to the scheduled CHs for aggregation. Meanwhile, the BDs exploit their connectivity between the CHs to disseminate the received aggregated models among all the CHs in the network. Different from conventional FL schemes, FL-EOCD scheme clusters devices in a way that the models of each cluster is implicitly shared to the other clusters, where the aggregation is implemented in a distributed manner. To the best of the authors’ knowledge, this is the
first work that introduces a decentralized FL scheme to efficiently deal with convergence rate, energy-efficiency, and learning time without involving a BS for aggregations.
	
\item 	We develop an energy-efficient framework to facilitate the introduced FL-EOCD scheme for the energy-limited devices in the partially connected D2D network. We formulate an  optimization	problem to minimize both the computation	and communication energy consumption of the devices	subject to constraints on FL time, device connectivity, scheduling	among the device-CH/RRB, and computation frequency allocation. The proposed joint optimization problem is NP-hard, and thus the global optimal solution
is computationally intractable. To tackle such difficult optimization
problem, an iterative solution framework is devised. Specifically, the overall CH selection, device-CH scheduling, and  computation frequency allocation problem is
decomposed into two sub-problems, namely,	CHs selection and their devices scheduling sub-problem and computation frequency allocation sub-problem. For each sub-problem, we propose a solution of polynomial computational
complexity, and rigorously justify the effectiveness
of the proposed solutions. By iteratively solving these sub-problems, an FL-EOCD algorithm is developed to minimize the energy consumption.

	\item The effectiveness of the proposed FL-EOCD scheme is verified and compared with the existing FL schemes via simulations. Simulation results show that the proposed FL-EOCD scheme achieves $55\%$ ($30\%$) and $35\%$ ($50\%$) lower energy consumption and shorter FL time, respectively, compared to the considered star-based and hierarchical FL schemes. In terms of convergence rate, if the required accuracy is low (e.g., $80\%$), the proposed scheme requires more global iterations (around $3\%$ more) than the star-based and the hierarchical FL schemes. If the required accuracy is high (e.g., $>93\%$), the proposed, star-based, and hierarchical schemes need the same number of global iterations.
\end{itemize}

\ignore{Most of the existing works focus on training models using one point of aggregation and all the users around that point apply local training. However, the users that hold data are usually spread out over a very large area, which means that gathering all models from all of these distributed users cannot be achieved by only on AP. Hence, several papers in the literature studied the federated learning over multi-point aggregators such as cellular, heterogeneous, F-RAN networks, cell-free massive MIMO networks, and networks based on D2D-communication. 
 In such kind of networks, the models are aggregated at least two times one at the APs and the second at the main FL server. The main disadvantage of this aggregation process is that the models are required to be sent over two hops, which leads to increase the communication cost, delay and consume more energy.

In this paper, we propose a new method of FL process, where there is no need for the second hop in multi-point aggregators system. In other words, the proposed system guarantees that the every local model of each cluster propagates through clusters by itself without any communication cost, more delay, and more consumed energy that is usually caused by the second hop transmission.  In particular, we propose a paradigm where some users can be nominated to work as usual users and bridge users among clusters or cells. Hereinafter, we denote these users as a bridge-user (BU). The mission of these users is to exchange the updated models between clusters. The users with better channels to both adjacent aggragators should be selected as a BU.  

The procedure should be as following

\begin{itemize}
\item Cluster the users into $N$ cluster, where cluster contains $n_i$ user.
\item Each cluster selects cluster head.
\item Each two adjacent clusters select a bridge user to implicitly exchange the models between them.
\item Allocate the resources, schedule the users, and describe the operations that would be implemented at the CH, users, and BUs.
\end{itemize} 
}

The rest of this paper is organized as follows. 
The system model is described in \sref{SMMM}. The optimization problem formulation is provided in \sref{PF}. The proposed solution is presented in \sref{J}. \sref{A} presents an overview of the properties of the FL-EOCD algorithm. The
simulation results and the concluding remarks are provided in \sref{NR} and \sref{CN}, respectively.

%	\vspace{-0.55cm}

\section{System Model and Federated Learning } \label{SMMM}
%\vspace{-0.4cm}
\begin{figure}[t!]
	\centerline{\includegraphics[width=0.65\linewidth]{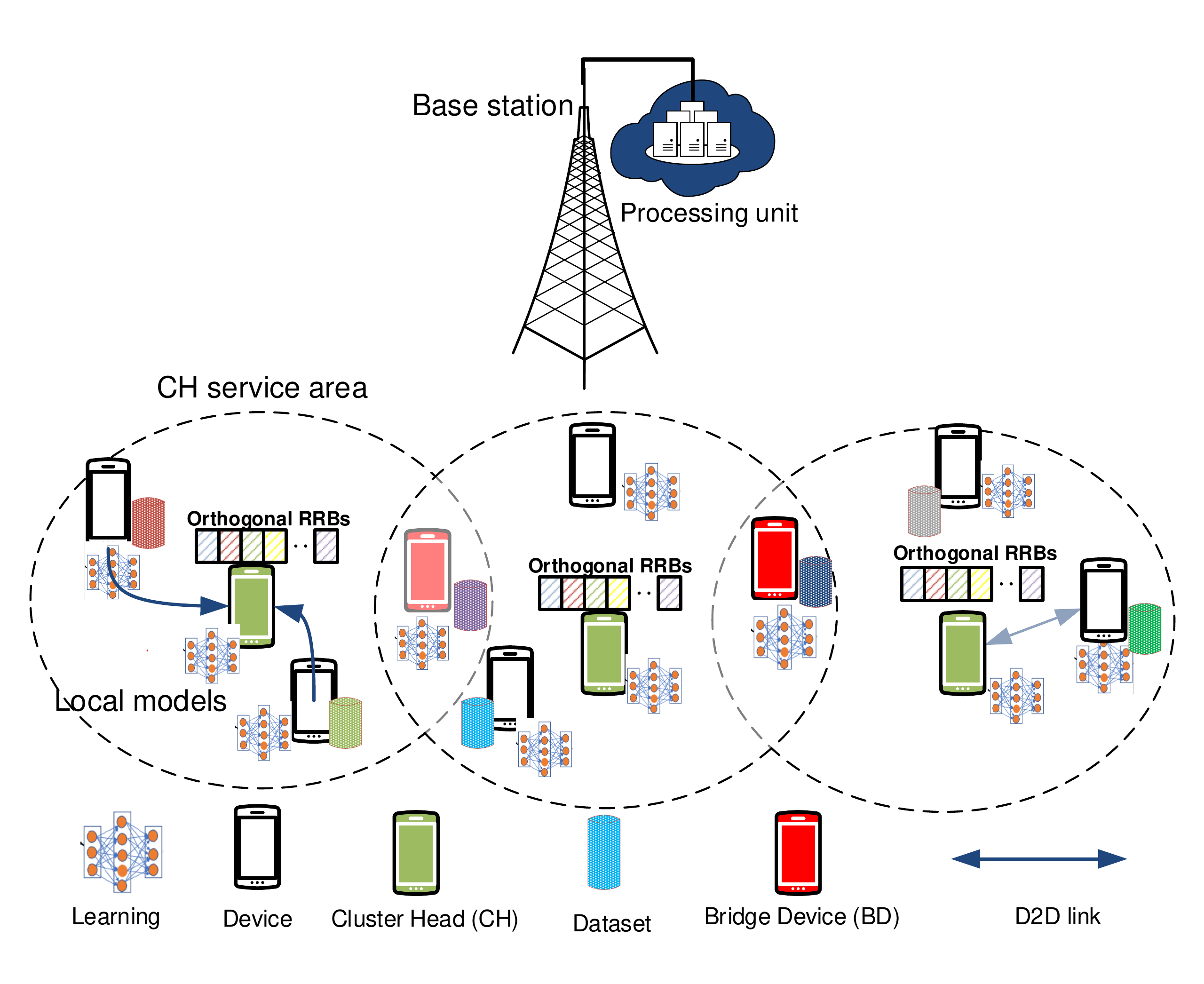}}
		\caption{Illustration of a partially connected D2D network.}
	\label{fig1}
	%\vspace{-1.5em}
\end{figure}
	
\subsection{System Overview}
	%\vspace{-0.25cm}
We consider a partially connected D2D network with a single BS and a set of devices that is denoted
by  $\mathcal{N}=\{1,2,\cdots,N\}$. An example of the envisioned system model is illustrated in Fig. \ref{fig1}. Each device $n$ has a limited coverage zone that represents the service area of that device within a circle of radius $\mathtt R$. Such a service area is denoted in this work as a cluster. It is assumed that the clusters are formed based on the ability of devices to conduct low-energy D2D communications, e.g., geographic proximity. Therefore, the set of devices in the $n$-th device's cluster is defined by $\mathcal A_{n}=\{m\in \mathcal{N}| d_{n,m}\leq \mathtt R$\}, where $d_{n,m}$ is the distance between the $n$-th device and the $m$-th device.  It is assumed that
each device cannot transmit and receive at the same time, i.e., half-duplex D2D channel is adopted, and each device is equipped with a single antenna.  Meanwhile, the BS in this work is not engaged in the FL global aggregation model and it is responsible for the scheduling process only. In particular, it decides the CH/BD selection and local device-CH association and delivers the scheduling parameters to the CHs/BDs over reliable control and non-payload communication (CNPC) channels.  %\ignore{If a part of the PC-D2D-aided BS network is disconnected,
%it can be considered as an independent network and optimized separately.In this work, we do not consider any constraints on the composition of devices within a cluster, as long as they
%follow a common D2D protocol []-[].}

The FL model training is considered through a sequence of
global iterations indexed by $t=\{1,2,\cdots, T\}$ as it will be explained in the next subsection. In each global iteration, devices have fixed locations and can change their locations in different global iterations \cite{semi, Ansari1}. Subsequently, at any FL global iteration, the set of devices $\mathcal N$ is divided into: i) a subset of devices that can perform local learning on their dataset, this subset of devices are called local devices, ii) a subset of devices that can perform local learning and aggregate the local models of devices in their clusters which are denoted by cluster heads (CHs), and iii) a subset of devices that can perform local learning and combine the received aggregated models from adjacent CHs with their own local models and forward them back to the adjacent CHs for aggregated model disseminations; this subset of devices are called bridge devices (BDs). Let $\mathcal{N}_l=\{1,2,\cdots,N_l\}$ be the subset of local devices; $\mathcal{C}=\{1,2,\cdots,C\}$ is the subset of CHs; $\mathcal{B}=\{1,2,\cdots,B\}$ is the subset of BDs; where $\mathcal N_l\cup \mathcal C\cup\mathcal B \in \mathcal N$. For simplicity and unless otherwise stated, $\mathcal N_l$ represents the set of scheduled devices and BDs, since the BDs can also perform local learning.

We propose that the local devices and BDs
transmit their local models to their assigned CHs via D2D links using orthogonal RRBs in an uplink phase. Specifically, each CH is granted a limited number of $Z$ orthogonal RRBs that are denoted by the set $\mathcal{Z}=\{1,2,\cdots,Z\}$, where local devices can use them to transmit their trained models to the scheduled CHs. 
Each RRB is used to denote a time/frequency resource block of each CH, i.e., a group of  orthogonal sub-carriers \cite{RRB1, RRB2}. In this work,  we consider that each local device is scheduled to only one RRB, and each RRB is assigned to only one local device. Thus, we consider the following two binary optimization variables for local device-CH assignment: (i) $s_{n,c}=1$ if the $n$-th device is assigned to the $c$-th CH and $s_{n,c}=0$ otherwise; and (ii) $r^n_{c,z}=1$ if the $n$-th device is allocated to the $c$-th CH on the $z$-th RRB, and $r^n_{c,z}=0$ otherwise. %Hence, $\mathbf S=\{s_{n,c}\}$ represents
%the scheduling matrix of devices $\mathcal N_l$ to the set of CHs $\mathcal N_c$ and $s_{n,c} \in \{0,1\}$ is the element of the $n$-th row and $c$-th column of $\mathbf S$. Similarly,  $\mathbf R=\{r^n_{c,z}\}$ represents the RRB allocation matrix and  $r^n_{c,z}=1$ is an element in  $\mathbf R$. 

Let $p_n$ denote the transmission power of the $n$-th device, and consequently, the achievable rate of the $n$-th device to the $c$-th CH over the $z$-th RRB can be given by $R^n_{c,z}=W\log_{2}(1+\frac{r^n_{c,z}p_n \left|h^n_{c,z}\right|^2}{N_0}), \forall c\in \mathcal A_n$, where $W$ is the bandwidth per RRB,  $N_0$ denotes the additive white Gaussian noise (AWGN) variance, and $h^n_{c,z}$ denotes the D2D channel  between the $n$-th device and the $c$-th CH over the $z$-th RRB. Similarly, let $h^n_{c}$ denote the D2D channel  between the $c$-th CH and the $n$-th scheduled device. Then, the achievable rate of D2D pair $(c, n)$ is given by $R^n_{c}=W\log_{2}(1+\frac{s_{n,c}p_c \left|h^n_{c}\right|^2}{N_0})$. Consequently, the $c$-th CH adopts a common transmission rate $R_c$ that is equal to the minimum achievable rates of all its scheduled devices that is denoted by $\mathcal N_l^c$. This adopted transmission rate is $R_c=\min_{n\in \mathcal N_l^c }R^n_{c}$.

\ignore{\textit{Remark II:} Thus, a realistic partially D2D network topology is considered,
	where CHs can only serve the subsets of devices in their coverage, and accordingly, adjacent CHs  transmit the aggregated models simultaneously using the same RRBs. Since we consider a chain of clusters in this work, clusters in the middle of the chain are associated with a maximum of two BDs. Thus, these BDs can exploit NOMA to successfully receive the aggregated models from the adjacent CHs. Therefore, each BD can receive the aggregated models from adjacent CHs on the same RRB, i.e., the BD is scheduled to a set of two CHs using NOMA [], [].  However, adjacent CHs can be practically associated with four BDs. Our analysis can be readily extended to this scenario, in which case the CH can receive four aggregated models from four BDs instead of two BDs.}

%	\vspace{-0.44cm}
\subsection{Federated Learning Process}
%	\vspace{-0.33cm}
We propose an FL-EOCD scheme that has distinguishing features compared to the conventional FL schemes, e.g., star-based FL models \cite{C1, C2, C3, C4, C5, Ansari1} and hierarchical FL models \cite{Dual1, Dual2, Dual3, Dual4}. %, and decentralized FL models \cite{fog, new1, new2}.
  Particularly, our envisioned scheme learns the local models at devices and aggregates a shared global model through CHs and BDs without the need for global aggregations at the BS.

Let $\mathcal D_n$ denote the local data set of the $n$-th local device, which represents a set of data samples $\{x_i, y_i\}$, where $x_i$ is the sample $i$’s input (e.g., image
pixels) and $y_i$ is the sample $i$’s output (e.g., label of the image). Here, $\mathcal D_n=\{(x_i, y_i): i=1, \cdots, D_n\}$ denotes the dataset involved in the training process of the $n$-th local device. The total dataset involved in the training process in the system is $\mathcal D$. A widely used model in linear regression, logistic regression, and deep neural networks is considered, where the local loss function on the dataset of the $n$-th local device can be calculated as
\begin{equation}\label{L_n}
L_n(\mathbf w)=\frac{1}{|\mathcal D_n|}\sum_{(x_i,y_i)\in \mathcal D_n}l_i(\mathbf w), \forall n\in \mathcal N_l,
\end{equation} 
where $|\mathcal D_n|$ is the total number of data samples of the $n$-th local device and $\mathbf w$ is the learning model parameter vector. For simplicity,
we define $D_n=|\mathcal D_n|$.  Here, $l_i(\mathbf w)$ is the loss function that measures the local training model error of the $i$-th data sample. Common examples of loss function  $l_i(\mathbf w)$ include linear regression with $l_i(\mathbf w)=0.5\|\boldsymbol x^T_i\mathbf w-y_i\|^2$ and support vector machine with $l_i(\mathbf w)=\max\{0,1-y_i\boldsymbol x^T_i \mathbf w\}, y_i\in \{-1,1\}$ \cite{C4, LC}. In what follows, we explain the FL process of our advised learning model.

\textbf{(1) FL process:} In the envisioned FL-EOCD process, there are global iterations and local iterations. The specific process of the proposed FL-EOCD  at the $t$-th global iteration can be summarized as follows, which is presented in Fig. \ref{fig2_2}. 

\begin{figure}[t!]
	\centerline{\includegraphics[width=0.85\linewidth]{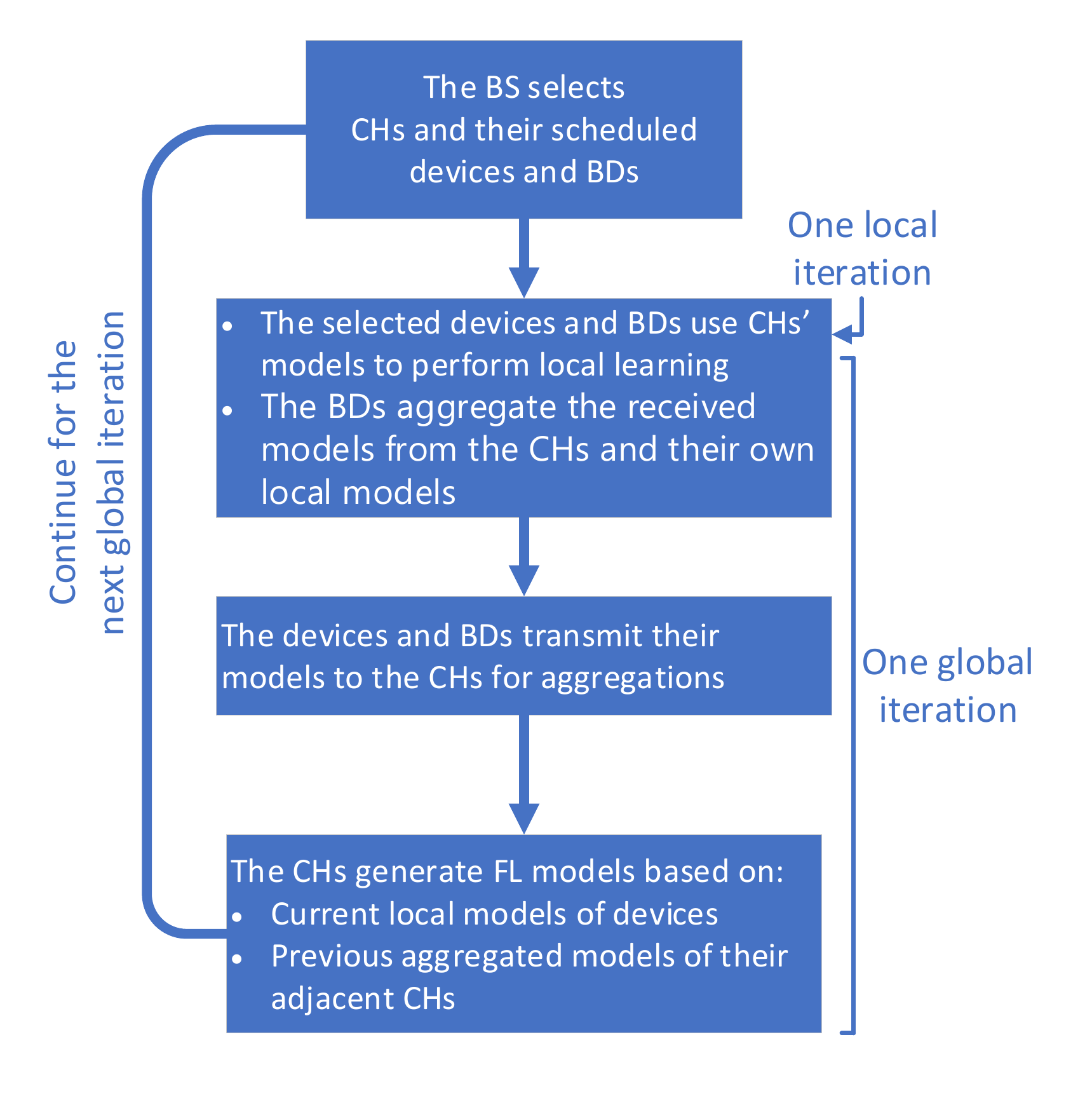}}
	\caption{The training procedure of the proposed FL.}
	\label{fig2_2}
	%\vspace{-1.5em}
\end{figure}

\begin{figure}[t!]
	\centerline{\includegraphics[width=0.85\linewidth]{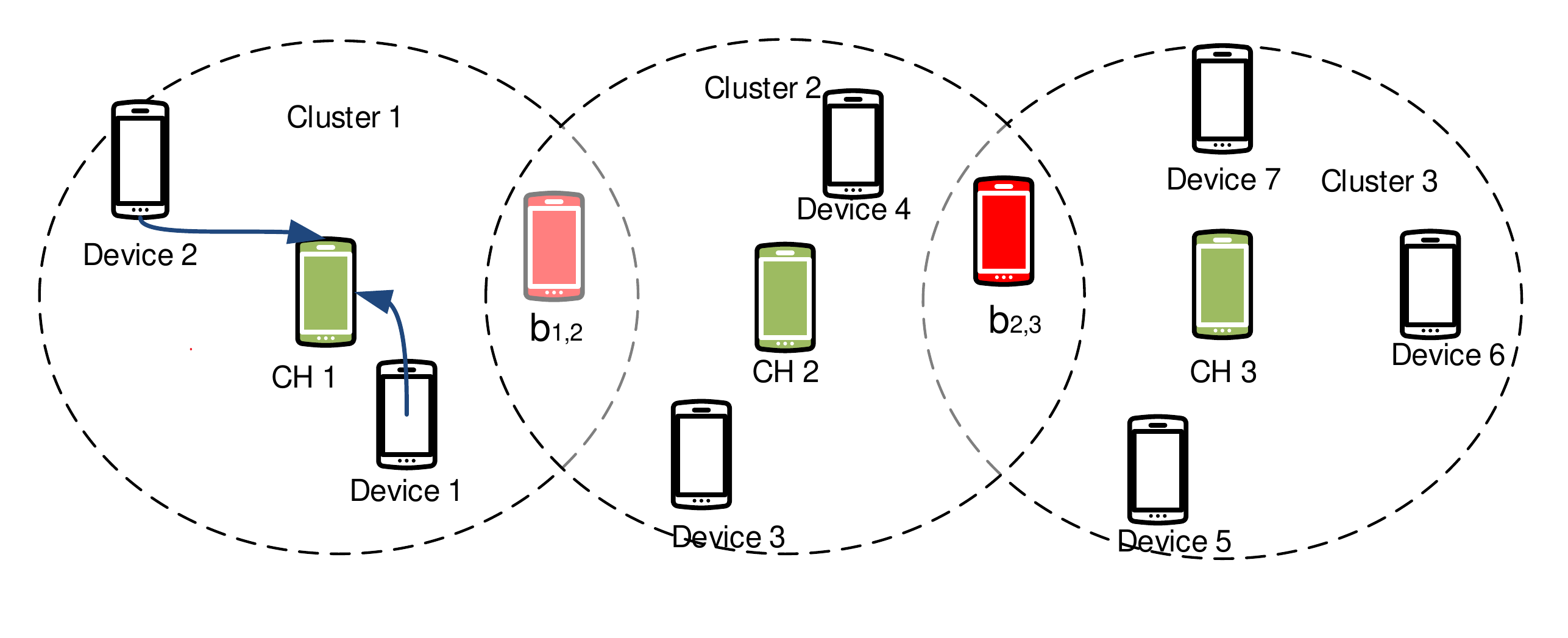}}
	\caption{FL model containing a set of $12$ devices that is divided into: (i) $C=3$ CHs for local aggregations, (ii) $B=2$ BDs for learning and for exchanging CHs' models between adjacent CHs, and (iii) $N_l=7$ local devices for learning.}
	\label{fig2}
	%\vspace{-1.5em}
\end{figure}

\ignore{\begin{figure}[t!]
	\centering
	\begin{minipage}{0.494\textwidth}
		\centering
		\includegraphics[width=0.65\textwidth]{fig2_flowshartn} % first figure itself
		\caption{The training procedure of the proposed FL.		}
		
		\label{fig2_2}
	\end{minipage}\hfill
	\begin{minipage}{0.494\textwidth}
		\centering
		\includegraphics[width=0.65\textwidth]{examplen} % first figure itself
		\caption{FL model containing a set of $12$ devices that is divided into: (i) $C=3$ CHs for local aggregations, (ii) $B=2$ BDs for learning and for exchanging CHs' models between adjacent CHs, and (iii) $N_l=7$ local devices for learning.}
		
		\label{fig2}
	\end{minipage}\hfill
\end{figure} 
}

\begin{itemize}
	\item The BS judiciously determines a subset of local devices $\mathcal N_l=\{1, 2, \cdots, N_l\}$ to independently train local ML models and  subsets of CHs and BDs;
	\item Each CH $c$ sends its updated aggregated model $\mathbf w_c(t-1)$ to its scheduled devices $\mathcal N^c_l$ and the associated BDs in a downlink phase. Note that adjacent CHs  transmit the aggregated models simultaneously using orthogonal RRBs and the CHs of the non-adjacent clusters can transmit their aggregated models to the selected local devices on the same RRB.%, while adjacent CHs use orthogonal RRBs. %Thanks to the realistic partially D2D network topology,	where CHs can only serve the subsets of devices in their coverage;
	
	\item Each selected device $n\in \mathcal N_l$ executes a local update algorithm 	(e.g., stochastic gradient descent (SGD) algorithm) based on its	dataset $\mathcal D_n$ and the aggregated model $\mathbf w_c(t-1)$ of its associated CH $c$. The output of this local learning at the $n$-th device is 	the updated local model $\mathbf w^*_{c,n}(t)$;

	\item Meanwhile, BDs receive the aggregated models $\mathbf w_c(t-1)$ from the associated CHs $\mathcal C$,  perform a local learning algorithm, aggregate the received models and their own updated models, forward the aggregated models to the corresponding CHs at the uplink phase of each iteration.  This ensures the dissemination of the aggregated models of the CHs over the entire network. On the other hand,	each CH can subtract its previous model from the received aggregated model of the BDs to retrieve the aggregated model of the adjacent CH(s) as will be explained next.

\end{itemize}

We define the loss function for the $c$-th CH as the average local loss across the $c$-th cluster,
\begin{equation}\label{L_c}
\hat{L}(\mathbf w)=\frac{1}{\sum_n|\mathcal D_n|}\sum^{N^c_l}_{n=1}|\mathcal D_n|L_n(\mathbf w).
\end{equation} 
The global loss function $L(\boldsymbol \omega)$ is then defined as the average loss across all the clusters,
\begin{equation}\label{L_t}
L(\mathbf w)=\frac{1}{\sum_n|\mathcal D_n|}\sum^{N_l}_{n=1}|\mathcal D_n|L_n(\mathbf w).
\end{equation} 
The objective of the FL model training is to find the optimal model
parameters $\mathbf w^*$ for $L(\mathbf w)$ that is expressed as follows $\mathbf w^*=\arg\min_{\mathbf w} L(\mathbf w)$. Next, we explain in details the operations that are performed at the CHs and the BDs.

\textbf{(2) BD and CH aggregation process:} For simplicity of the ensuing analysis, we explain the operations that should be implemented at the CHs and BDs through an example.  Consider an example of the envisioned FL model that is shown in Fig. \ref{fig2}, where the CHs and their scheduled devices and BDs are given. Suppose that each CH is granted $4$ RRBs where local devices can transmit their local models. We first explain the operations of the CHs and BDs based on the overlapped chain clustering as in Fig. \ref{fig2}, then we generalize the operations for any number of clusters and any form of clustering structure. For notations convenience, here, we use $c_i$ to denote the $c_i$-th CH, while in the remaining sections we use the notation $c$ for the $c$-th CH.  %\footnote{For simplicity of the ensuing analysis, this example considers a chain of clusters, where CHs in the middle of the chain are associated with a maximum of two BDs. However, adjacent CHs can be practically associated with multiple BDs. Accordingly, we consider the more general network setup, in which case the CH can be associated with multiple BDs.}.  
%For the FL process illustration, we next explain the operations occurred on the CHs and BDs over the global iterations.

\textbf{Iteration 1:} The $c_1$-th, $c_2$-th, and $c_3$-th CHs transmit their initial models $\mathbf w_{c_i}(0)$ to the associated devices. For example, the $c_1$-th CH transmits its model $\mathbf w_{c_1}(0)$ to the $n_1$-th and $n_2$-th devices, and the $b_{1,2}$-th BD. The $c_1$-th and $c_3$-th CHs  can broadcast their models on the same RRB, while the $c_2$-th CH uses different RRB. Accordingly, the $b_{1,2}$-th and $b_{2,3}$-th BDs receive the models from the $c_1$-th and $c_2$-th CHs and the $c_2$-th and $c_3$-th CHs, respectively. 
Afterword, each device executes the local learning algorithm, whose output is $\mathbf w^{*}_{c_i,n}(1)$, i.e.,  $\mathbf w^{*}_{1,1}(1)$ represents the local model of the $n_1$-th device that is scheduled to the $c_1$-th CH. While the devices do the local learning, the  $b_{i,j}$-th BD picks one model from the received ones (i.e., $\mathbf w_{c_i}(0)$ or $\mathbf w_{c_j}(0)$) or both of them after aggregation to update its own model $\mathbf w_{d_{ij}}(0)$ using its available dataset. Hence, the $d_{i,j}$-th BD has now three models which are $\mathbf w_{b_{i,j}}(1)$, $\mathbf w_{c_i}(0)$, and $\mathbf w_{c_j}(0)$. The $b_{i,j}$-th BD then aggregates these three models as follows 
\begin{multline}
\label{BD_agg}
    \mathbf w_{b_{i,j}}^{ag}(1)= \frac{1}{D_{c_i}+D_{c_j}+D_{b_{i,j}}}(D_{c_i}\mathbf w_{c_i}(0) + D_{c_j}\mathbf w_{c_j} (0)\\ + D_{b_{i,j}} \mathbf w_{b_{i,j}}(1)  ),
\end{multline}
where $D_{c_i}$ and $D_{c_j}$ are the total number of data points of the local devices that are associated to the $c_i$-th and $c_j$-th clusters, respectively, and $D_{b_{i,j}}$ is the number of data points at the $b_{i,j}$-th BD. At the uplink phase of the first iteration, each local device transmits its updated model, while each BD transmits the aggregated model given in \eqref{BD_agg}. 

\textbf{Iteration 2:}  Now, each CH has received different models from the associated local devices and BDs. We propose that each CH first aggregates the models of the local devices (without including the models of the BDs) as follows
\begin{equation}
\label{wcD}
    \mathbf w_{c_i}(1) = \frac{1}{D_{c_i}} \sum_{n=1}^{N_l^{c_i}} D_n \mathbf w_{c_i,n}^*(1).
\end{equation}
Then, the $c_i$-th CH extracts the models of the other clusters aggregated with the BD's model  from the models received by the BDs. For instance, the $c_2$-th CH obtains the model $D_{c_1}\mathbf w_{c_1} (0) + D_{b_{1,2}} \mathbf w_{b_{1,2}}(1)$ from the received $\mathbf w_{b_{1,2}}^{ag}(1)$ as follows
\begin{multline}
   D_{c_1}\mathbf w_{c_1} (0) + D_{b_{1,2}} \mathbf w_{b_{1,2}}(1) = (D_{c_1}+D_{c_2}+D_{b_{1,2}})\mathbf w_{b_{1,2}}^{ag}(1)\\ - D_{c_2}\mathbf w_{c_2}(0),
\end{multline}
where $\mathbf w_{c_2}(0)$ is known at the $c_2$-th CH from the previous iteration. Similarly, the  $c_2$-th CH can obtain the model $D_{c_3}\mathbf w_{c_3} (0) + D_{b_{2,3}} \mathbf w_{b_{2,3}}(1)$ from the received $\mathbf w_{b_{2,3}}^{ag}(1)$. In this way, each CH can extract the models of the adjacent cluster(s) that are aggregated with the corresponding BD model. Now each CH aggregates the model given by \eqref{wcD} with the extracted models from the BDs and broadcasts them back to the associated devices. For example, the transmitted model of the $c_2$-th CH is given by 
\begin{align}
\label{wcag}
    \mathbf w_{c_2}^{ag}(1) = &\nonumber \frac{1}{D_{c_2}+D_{c_1}+D_{b_{1,2}}+D_{c_3}+D_{b_{2,3}}}\bigg(D_{c_2}\mathbf w_{c_2}(1)\\& \nonumber +D_{c_1}\mathbf w_{c_1}(0) +D_{c_3}\mathbf w_{c_3}(0) 
    + D_{b_{1,2}} \mathbf w_{b_{1,2}}(1) \\& + D_{b_{2,3}} \mathbf w_{b_{2,3}}(1) \bigg).
\end{align}
Each device then can update its local model based on the available dataset and the received model from the associated CH. Whereas, each BD 
 first removes the redundant models that has been received from the previous round, conducts the local learning, sums the aggregated models of the adjacent CHs, and transmits back to the associated CHs. For example, the $b_{1,2}$-th BD first removes the terms that contain $\mathbf w_{c_1}(0)$ and $\mathbf w_{b_{1,2}}(1)$ from the received model $\mathbf w_{c_2}^{ag}(1)$ that is given by \eqref{wcag}, since these terms are known from the previous iteration. % by removing the terms that contain $\mathbf w_{c_1}(0)$ and $\mathbf w_{12}(1)$,
   Particularly, at the second iteration, for instance the $b_{1,2}$-th BD transmits the following to the $c_1$ and $c_2$-th CHs:
  
 \begin{align}
     \mathbf w_{b_{1,2}}^{ag}(2) =& \nonumber  \frac{1}{D_{c_2}+D_{c_1}+D_{b_{1,2}}+D_{c_3}+D_{b_{2,3}}}\bigg(D_{c_2}\mathbf w_{c_2}(1) \\& \nonumber +D_{c_1}\mathbf w_{c_1}(1) +D_{c_3}\mathbf w_{c_3}(0) + D_{b_{1,2}} \mathbf w_{b_{1,2}}(2)\\&  +D_{b_{2,3}} \mathbf w_{b_{2,3}}(1) \bigg ).
 \end{align}
 
 It is important to note that once the model received by a CH or a BD, they first remove the models that they know from the previous iterations.
 
 \textbf{Iteration 3 and beyond:}  At the third iteration, we can notice that the model of the $c_3$-th CH is attained by the $c_1$-th CH and vise versa but with a delay of two iterations. In the third iteration and in the next ones, each CH first aggregates the models of the associated devices, cleans the models received from the associated BDs by removing the known previous models, aggregates the devices' model with the models received by the associated BD(s), and then broadcasts back to the associated devices. Whereas, each BD first cleans the received models, conducts local learning to update its own model, aggregates the received models with the updated one, transmits to the associated CHs. 
 
  In general, at the $t$-th iteration (when $t\geq C$), the aggregated model at the $c_i$-th CH is given at the top of the next page, 
  \begin{table*}
\begin{equation}
    \label{t_iter}
    \mathbf w_{c_i}^{ag}(t-1)= \frac{1}{\sum_{j=1}^{C} D_{c_j}+\sum_{q=1}^{B} D_{b_{q,q+1}}}\bigg(\sum_{j=1}^{C} D_{c_j}\mathbf w_{c_j}(t-1-y_{i,j}) + \sum_{q=1}^{B}  D_{{q,q+1}}\mathbf w_{b_{q,q+1}}(t-1-v_{i,q}) \bigg),
\end{equation}
\hrulefill
\vspace*{-0.5cm}
\end{table*}
where $y_{i,j}$ is the lowest number of BDs that are between the $c_i$-th and  $c_j$-th clusters, and $v_{i,q}$ is the lowest number of clusters between the $c_i$-th CH and $b_{q,q+1}$-th BD. For example, in Fig. \ref{fig2}, there is a single BD between the  $c_1$-th and $c_2$-th CHs and two BDs between the $c_1$-th and $c_3$-th CHs, so $y_{1,2}=1$ and $y_{2,3}=2$. On the other hand, there is no cluster between the $c_1$-th CH and  $b_{1,2}$-th BD and a single cluster between the $c_1$-th CH and $b_{2,3}$-th BD, so $v_{1,1}=0$ and $v_{1,2}=1$. Note that  $y_{i,j} \leq B, v_{i,q} \leq C, \forall c $ and $ j$, and $y_{i,i}=0 , v_{i,i} = 0, \forall i$. Note that the expression given in \eqref{t_iter} is applicable for any clustering structure as long as the CHs and BDs remove the redundant models that is reflected from the previous rounds.

\textbf{(3) Local learning process:} We consider the SGD algorithm to solve the local training problem of devices. Let $\mathbf w^{\tilde{t}}_{c_i,n}(t)$ denote the local model parameter of the $n$-th local device with the associated $c_i$-th CH at the $t$-th global round and the $\tilde{t}$-th local iteration. Define $\mathbf w^{*}_{c_i,n}(t)$ as the local model parameter  of the $n$-th local device with the associated $c_i$-th CH at the $t$-th global iteration after the local iterations are converged, which can be expressed as
\begin{equation}\label{LC1}
\mathbf w^*_{c_i,n}(t)=\arg \min_{\mathbf w_{c_i}} \nabla L_n(\mathbf w_{c_i}(t)).
\end{equation} 
The local model parameter $\mathbf w^{\tilde{t}}_{c_i,n}(t)$ in the $\tilde{t}$-th local iteration  is
updated according to the gradient of $L_n(\mathbf w_{c_i}(t))$ and the learning
rate $\delta$, i.e.,
\begin{equation}\label{LC2}
\mathbf w^{\tilde{t}+1}_{c_i,n}(t)=\mathbf w^{\tilde{t}}_{c_i,n}(t)-\delta\nabla L_n(\mathbf w^{\tilde{t}}_{c_i,n}(t)),
\end{equation} 
where $\mathbf w^{0}_{c_i,n}(t)=\mathbf w_{c_i}(t)$ because it is the initial value and can be obtained from the $c_i$-th CH. Then, the modified loss function of the $n$-th local device at the
$t$-th global round $\nabla G_n(\mathbf w^{\tilde{t}}_{c_i,n}(t))$ can be calculated as $\nabla G_n(\mathbf w^{\tilde{t}}_{c_i,n}(t))=\nabla L_n(\mathbf w^{\tilde{t}}_{c_i,n}(t))-\nabla L_n(\mathbf w_{c_i}(t))+\eta \nabla L(\mathbf w_{c_i}(t)),$ where $\eta$ is a positive constant to control the FL convergence rate. Since $\mathbf w^{*}_{c_i,n}(t)$  is the converged local model parameter, we have 
\begin{align}\label{updated}
\nabla G_n(\mathbf w^{*}_{c_i,n}(t))=& \nonumber \nabla L_n(\mathbf w^{*}_{c_i,n}(t))-\nabla L_n(\mathbf w_{c_i}(t))\\&+\eta \nabla L(\mathbf w_{c_i}(t))=0.
\end{align} 
The local iterations are continued until the local model accuracy $\epsilon_l$
is reached, which is defined as $G_n(\mathbf w^{\tilde{t}}_{c_i,n}(t))-G_n(\mathbf w^{*}_{c_i,n}(t))\leq \epsilon_l[G_n(\mathbf w_{c_i}(t))-G_n(\mathbf w^{*}_{c_i}(t))]$. On the other hand, the global iteration continues until the global model accuracy
$\epsilon_g$ is reached, which is defined as
$L(\mathbf w(t))-L(\mathbf w^{*})\leq \epsilon_g[L(\mathbf w^0)-L(\mathbf w^{*})]$.

	%\vspace{-0.8cm}
\section{Modeling and Problem Formulation}\label{PF}

We consider the energy consumption minimization problem that involves a joint management of computation and communication resources, CH clustering, and device
selection problem for the delay-constrained FL. As such, we evaluate the introduced FL scheme from communication and computation perspective.  Therefore, first we provide FL time and total energy consumption expressions, and then we formulate the energy consumption minimization problem.

\subsection{FL Time and Energy Consumption}

\textbf{ (1) FL time constraint:} At the $t$-th global iteration, the FL delay  is caused by: (i) the computation time for local model training at the scheduled local devices $\mathcal N_l$;  (ii) the
transmission time for transmitting local trained updates to the associated CHs $\mathcal C$; and (iii) the transmission time for transmitting the clusters aggregated models from the associated CHs to the scheduled devices $\mathcal N_l$. 

The $n$-th local device executes a local update algorithm until a local accuracy $\epsilon_l$ is achieved \cite{LC}. Let us denote $Q_n$ as the number of CPU cycles to process one data sample and $f_n$ is the computational frequency of the CPU in the $n$-th local device (in cycles per second). We obtain the number of CPU cycles of the $n$-th local device in a one local iteration over its dataset $\mathcal D_n$ as $Q_nD_n$ and the computation time for a one local iteration in the $n$-th device as $\frac{Q_n D_n}{f_n}$ \cite{CPU1, CPU2}. The computation time of the $n$-th local device is then expressed as
\begin{equation}
T^{comp}_n=T_l\frac{Q_nD_n}{f_n},
\end{equation}
where $T_l$ is the number of local iterations to reach the local accuracy $\epsilon_l$ in the $n$-th local device. The training parameters of the local devices are considered to have a fixed size \cite{S1}, and denote $s$ as parameter data size of the $n$-th local device. Therefore, the transmission time of the $n$-th local device for uploading
its parameters to the $c$-th CH on the $z$-th RRB is
$T^{com}_n=\frac{s}{R_{c,z}^n}$.

The aggregated model parameters can only be updated by the CHs
after all local model parameters are received from the scheduled devices and BDs. Consequently, the FL time is  dominated
by the longest duration time of receiving the local parameters from all scheduled devices and the longest duration time of aggregated models from the CHs to its scheduled devices. Note that the transmission duration of the $c$-th CH to transmit its aggregated parameter with size $s$ to its scheduled devices $\mathcal N_l^c$ is $T^{com}_c=\max_{n \in \mathcal N_l^c}\left\{\frac{s}{R^n_c}\right\}$. Hence, the learning time $\tau_c$ at the $c$-th CH can be calculated as
\begin{align}
\label{FL_1}
\tau_c&= \nonumber \max_{n \in \mathcal N_l^c}\{T^{comp}_n+T^{com}_n\}+T^{com}_c \\& =\max_{n \in \mathcal N^c_l}\left\{T_l\frac{Q_n D_n}{f_n}+\frac{s}{R_{c,z}^n}\right\}+\max_{n \in \mathcal N_l^c}\left\{\frac{s}{R^n_c}\right\}.
\end{align}

Therefore, for each global iteration, the total learning time $\tau=\max_{c\in \mathcal C}(\tau_c)$, which should be no more than the  QoS requirement
(i.e., $\tau$ should be no more than the maximum federated
learning time $T_\text{max}$). Typically, this constraint, over all global iterations $T$, is expressed as
\begin{align}\label{17}
T\left\{\max_{n \in \mathcal N^c_l}\left\{T_l\frac{Q_n D_n}{f_n}+\frac{s}{R_{c,z}^n}\right\}+\max_{n \in \mathcal N^c_l}\left\{\frac{s}{R^n_c}\right\}\right\}\leq T_\text{max},
\end{align}

\textbf{(2) Energy consumption:} The total energy consumption of our considered system is caused by the following:
\begin{itemize}
\item 	The energy that is consumed due to local computation at the local devices, where the energy 	consumption of the $n$-th local device to process a single CPU cycle is $\alpha f^2_n$, and $\alpha$	is a constant related to the switched capacitance \cite{CPU2, CPU3}. Thus, the energy consumption of the $n$-th local device	for local computation is $
E^{comp}_n=T_lQ_nD_n\alpha f^2_n$ \cite{Ansari1}.

\item The energy consumption to transmit local model parameters to the associated CHs can be denoted by $E^{com}_n$ and calculated as $p_nT^{com}_n$. Let us denote the total energy consummation at the $n$-th local device as $E_n=E^{comp}_n+E^{com}_n$. Similarly, the energy consumption to transmit the $c$-th CH aggregated model	 to the associated local devices is denoted by $E^{com}_c$ and calculated as $p_cT^{com}_c$.
\end{itemize}

Therefore, the total energy consumption of our envisioned system can be	calculated as 
\begin{align}\nonumber 
E&=T\left\{\sum_{n\in\mathcal N_l}E_n+\sum_{c\in\mathcal C}E^{com}_c\right\} \\ &=T\left\{\sum_{n\in\mathcal N_l}\left[T_lQ_nD_n\alpha f^2_n+ \frac{sp_n}{R^n_{c,z}}\right]+\sum_{c\in\mathcal C}\left[\frac{sp_c}{R_{c}}\right]\right\}.
\end{align}

\subsection{Problem Formulation}
	%\vspace{-0.3cm}
We consider the following two binary optimization variables: (i) $s_{n,c}$ for device-CH scheduling; and (ii) $r^n_{c,z}$ for device-RRB assignment. In addition, we consider the computation frequency allocation vector $\textbf f_\text{N}=[f_n]$.  
The energy consumption minimization optimization problem  is  formulated as
\begin{subequations}
	\begin{align} \nonumber 
	& \text{P0:} 
	\min_{\substack{s_{n,c} \in \{0,1\}, r^n_{c,z} \in \{0,1\},\\ \textbf{f}_{N}, \mathcal C\in \mathcal P(\mathcal N)}}  E\\
	&\rm s.t.
	\begin{cases}  \nonumber
	\hspace{0.2cm} \text{C1:}\hspace{0.2cm} \sum_{c\in \mathcal C}s_{c,n} = 1 ~\&~ \sum_{z\in \mathcal Z}r^n_{c,z} =1, \forall n \in \mathcal N_l, \\ 
	\hspace{0.2cm} \text{C2:}\hspace{0.2cm} \mathcal B^c\cap \mathcal B^{c'} = 1, \forall (c,c')\in \mathcal C,\\ 	\hspace{0.2cm} \text{C3:}\hspace{0.2cm} f^{\min}_n\leq f_n \leq f^{\max}_n, ~\forall n\in \mathcal{N}_l,\\	\hspace{0.2cm} \text{C4:}\hspace{0.2cm} \tau\leq T_\text{max},\\	
		%\hspace{0.2cm}	 \text{C5:}\hspace{0.2cm} s_{n,j}\in \{0,1\}, r^k_{i,j}\in \{0,1\}.
	\end{cases}
	\end{align}
\end{subequations}
In P0, C1 indicates that each local device is assigned to only one CH and to only one RRB in that CH; C2 indicates that at maximum one BD can be assigned to two adjacent CHs   at the same time. In practice and due to the nature of D2D communications, we can have more than one BD at the overlap regions of adjacent CHs. In this case, the BD in the middle of overlapped regions is selected and other BDs are scheduled to the most suitable CHs. C3 is the constraint on local computation resource allocation of devices; and C4 indicates the QoS requirement on the FL time.

To simplify C4 and using \eref{17}, we have  $T\max_{n \in \mathcal N^c_l}\left\{T_l\frac{Q_n D_n}{f_n}+\frac{s}{R_{c,z}^n}+\frac{s}{R^n_c}\right\}\leq T_\text{max}$  which can be transformed into $T\left(T_l\frac{Q_n D_n}{f_n}+\frac{s}{R_{c,z}^n}+\frac{s}{R^n_c}\right)\leq T_\text{max}, \forall n\in\mathcal N^c_l, c\in \mathcal C$. Hence,  $T_l\frac{Q_n D_n}{f_n}+\frac{s}{R_{c,z}^n}\leq T_\text{c,max}$, where $T_\text{c,max}=\frac{T_\text{max}}{T}-\frac{s}{R^n_c}$. Hence, similar to \cite{Ansari1}, the
lower bound of device's computation frequency can
be calculated as $f_n\geq \frac{T_lQ_n D_n}{T_\text{c,max}-\frac{s}{R_{c,z}^{n}}}$. For simplicity, we denote $\hat f_n = \frac{T_lQ_n D_n}{T_\text{c,max}-\frac{s}{R_{c,z}^{n}}}$. Then, $f_n$ satisfies  $f_n\geq \hat f_n$, and accordingly, P0 can  be transformed into
\begin{subequations}
	\begin{align} \nonumber 
	& \text{P1:} 
	\min_{\substack{s_{n,c} \in \{0,1\}, r^n_{c,z} \in \{0,1\},\\ \textbf{f}_{N}, \mathcal C\in \mathcal P(\mathcal N)}}  E\\
	&\rm s.t.
	\begin{cases}  \nonumber
	\hspace{0.2cm} \text{C1}, \text{C2,}\\
	\hspace{0.2cm} \text{C3:}\hspace{0.2cm} \hat f_n\leq f_n \leq f^{\max}_n, ~\forall n\in \mathcal{N}_l,\\	\hspace{0.2cm} \text{C4:}\hspace{0.2cm} \tau=T\left\{\max_{n \in \mathcal N^c_l}\left\{T_l\frac{Q_n D_n}{f_n}+\frac{s}{R_{c,z}^n}+\frac{s}{R^n_c}\right\}\right\}. %\hspace{0.2cm} \text{C5}.
	\end{cases}
	\end{align}
\end{subequations}
	
\section{Proposed Solution}\label{J}
	
\subsection{Proposed Solution Approach}
	
Problem \text{P1}  is a mixed-integer non-linear programming problem as it involves a joint optimization of the continuous computation frequency allocation and discrete resource assignment variables (e.g., CH-devices/BD and RRB scheduling). In addition, the selection of CHs $\mathcal C$ and their scheduled local devices/BDs are inter-dependent and should be considered jointly. We can readily show that \text{P1} is an NP-hard problem. Even if we fix one of these variables, \text{P1} still remains NP-hard. An exhaustive search is required to obtain the global optimal solution of problem \text{P1}, which is infeasible in practical systems. To overcome the aforementioned computational intractability, an iterative approach is devised, where we decompose \text{P1} into two sub-problems. Particularly, considering that $f^*_n~ \forall n\in \mathcal N_l$ are given, the energy consumption can be obtained from the following optimization problem.
\begin{align} 
	 \text{P2:}
	\min_{\substack{s_{n,c} \in \{0,1\}, r^n_{c,z} \in \{0,1\},\\ \mathcal C\in \mathcal P(\mathcal N)}}  E~~
	\rm s.t.
		\hspace{0.2cm} \text{C1}, \text{C2,} ~\text{C4}.
\end{align}
While for the given CH clustering and device-CH/RRB scheduling, the computation frequency allocation of the local devices are
obtained by solving the following sub-problem
\begin{align} 
\text{P3:}
\min_{\substack{\textbf{f}_{N}}}  E~~
\rm s.t.
\hspace{0.2cm} \text{C3}, \text{C4}.
\end{align}
A suitable solution to \text{P1} is obtained by iteratively solving the
sub-problems \text{P2} and \text{P3}. The sub-problems \text{P2} and \text{P3} are solved in \sref{S1} and \sref{S2}, respectively, and the overall algorithm to solve \text{P1} is provided in \sref{A1}.

\vspace{-0.1cm}
\subsection{CH, BD, and Local Device Scheduling: P2 Reformulation}
\vspace{-0.1cm}
It is recalled that the set of CHs is $\mathcal C \in \mathcal P(\mathcal N)$. Let $\kappa_c(\mathcal C)$ be the
set of scheduled local devices that are associated to CH $c\in \mathcal C$. Note that $\kappa_c(\mathcal C)$ is a function of the set of CHs $\mathcal C$. However, for convenience, the set will be dropped unless it is required, e.g., we will use $\kappa_c$ instead of  $\kappa_c(\mathcal C)$.

%\textit{Remark 1:} Since devices have limited resources (e.g., computational capability and battery), we consider that devices can perform either local learning on their local datasets or aggregation and cannot perform both learning and aggregation.

\text{P2} considers a joint optimization of both
CHs selection and their scheduled devices and BDs, denoted
by $\mathcal C \in \mathcal P(\mathcal N)$, $\mathcal N_l$, and $\mathcal B$, respectively. In particular, \text{P2} involves  inner and outer optimization problems: the inner one optimizes the device/BD scheduling to reduce
the network utility function while the outer one aims at finding the most suitable CHs. Therefore, the joint inner and outer optimization sub-problem P2, for a given $f^*_n~ \forall n\in \mathcal N_l$, can be given at the top of the next page.
\begin{table*}
\begin{align} \label{p3a}
\text{P2.1:} &\min_{\substack{{\mathcal C \in \mathcal P(\mathcal N), \kappa_c(\mathcal C)\in \mathcal P(\mathcal A_c)}}} 
\sum_{c\in \mathcal C}\left( \sum_{n\in \kappa^*_c(\mathcal C)} E_n+E^{com}_c\right),\\
&\rm s. ~t. ~~
\kappa^*_c(\mathcal C)=\arg \min_{\mathtt \kappa_c(\mathcal C)\in \mathcal P(\mathcal A_c)}\left(\sum_{n\in \kappa_c(\mathcal C)}\left( T_l\frac{Q_n D_n}{f_n}+\frac{s}{R_{c,z}^n}\right)\right), \forall c \in \mathcal C, \label{p3p}  \\ \nonumber
&  \tau=T\left\{\max_{n \in \mathcal \kappa_c(\mathcal C)}\left\{T_l\frac{Q_n D_n}{f^*_n}+\frac{s}{R_{c,z}^n}+\frac{s}{R^n_c}\right\}\right\}.
\end{align}
\hrulefill
\vspace*{-0.5cm}
\end{table*}

Unfortunately, the outer optimization problem \eref{p3a} and the inner  optimization problem \eref{p3p} of P2.1 are inter-dependent and should be solved jointly. Solving P2.1 optimally using an exhaustive search is intractable \cite{D2D6, D2D7}. In this work, we develop a tractable graph theoretical and clustering method that carefully selects the number of clusters, their corresponding CHs, and scheduled local devices/BDs. As such, it achieves an effective solution to the optimization problem P2.1.

\subsection{CH Clustering and Local Device/BD Scheduling: Solution to P2.1}\label{S1}
	
\textbf{(1) Graph representation:} We first design a conflict graph that considers the restrictions in device-CH/RRB scheduling and the partial connectivity
of the CHs. In this context, a conflict graph is constructed for each device $c$ in the network and denoted by $\mathcal G_c(\mathcal V_c, \mathcal E_c)$ wherein $\mathcal V_c$ and $\mathcal E_c$ are the set of vertices and edges, respectively. Since each device $c$ can schedule local devices in its coverage zone $\mathcal A_c$ only, a vertex $v^z_{c,n} \in \mathcal V_c$ is generated for each local device $n\in \mathcal A_c$ and for each RRB $z$. In the conflict graph, two distinct vertices $v^{z}_{c,n}$ and $v^{z'}_{c,n'}$ are adjacent by a scheduling conflict edge if one of the
following cases occurs:
\begin{itemize}
	\item \textbf{CC1:} The same  local device is associated with both vertices $v^{z}_{c,n}$ and $v^{z'}_{c,n'}$, i.e., $n=n'$.
	\item \textbf{CC2:} The same RRB is associated with both vertices $v^{z}_{c,n}$ and $v^{z'}_{c,n'}$, i.e., $z=z'$ and $n \neq n'$.
\end{itemize} 
Recall that each selected CH is granted a limited number of $Z$ RRBs. Thus, each CH can schedule at maximum $Z$ local devices, and the maximum of $Z$ vertices can be represented by the minimum-weight independent set (MWIS) in the conflict graph. In what follows, we transform problem \eref{p3p}  into a MWIS problem. First, we present the following remarks about the MWIS. 
\begin{itemize}
	\item  Any independent set (IS) $\kappa_c$ in the conflict graph $\mathcal G_c$ must satisfy: i) $\kappa_c\subseteq \mathcal G_c$; ii) $\forall v, v' \in \kappa_c$, where $(v, v') \notin \mathcal E_c$.
	\item A minimal IS in an undirected graph cannot be expanded to	add one more vertex without affecting the pairwise non-adjacent vertices.
	\item The IS $\kappa_c$  is referred to an MWIS of $\mathcal G_\text{c}$ if it satisfies: i) $\kappa_c$ is an IS in graph $\mathcal G_\text{c}$; ii) the sum weights of the vertices in $\kappa_c$ offers the minimum among all ISs of $\mathcal G_\text{c}$. Therefore, the MWIS will be denoted as $\kappa^*_c$.
\end{itemize}
Given $\mathcal{G}_c$,  the optimization problem \eref{p3p} for a fixed $f^*_n~ \forall n\in \mathcal N_l$ is similar to
MWIS problems. In  MWIS problems, two vertices must be non-adjacent in the conflict graph, which is similar to problem \eref{p3p}, where two local devices cannot be allocated with the same RRB. Likewise, the same RRB cannot be allocated to two different local devices. Furthermore, the objective of \eref{p3p} is to minimize the energy consumption, while the
goal of MWIS is to select a set of vertices with minimum weights. Hence, the following corollary characterizes the solution of the optimization problem \eref{p3p}.

\begin{corollary}
The solution 
$\kappa^*_c(\mathcal C)$ of the optimization
problem \eref{p3p} for device $c$ is the MWIS $\kappa^*_c$
in the conflict graph $\mathcal G_c$ of device $c$ in which the weight of	each vertex $v^z_{c,n}$ is $w(v^z_{c,n})=T_l\frac{Q_n D_n}{f^*_n}+\frac{s}{R_{c,z}^n}$. %The solution of the MWIS problem is presented in Algorithm \ref{alg:LGS}.
\end{corollary}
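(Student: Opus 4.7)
The plan is to cast \eref{p3p} as a minimum-weight independent set (MWIS) problem on $\mathcal{G}_c$ by exhibiting a bijection between feasible device--RRB schedules at CH $c$ and independent sets of $\mathcal{G}_c$ that preserves the objective value. The corollary then follows by minimality of MWIS.

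First I would fix the CH set $\mathcal{C}$ and isolate one CH $c \in \mathcal{C}$. Given any feasible schedule at $c$, namely a subset $\kappa_c \subseteq \mathcal{A}_c$ of served local devices together with an injective RRB assignment $z(\cdot):\kappa_c \to \mathcal{Z}$ (with injectivity and the one-RRB-per-device rule imposed by C1), I define the vertex set $I(\kappa_c)=\{v^{z(n)}_{c,n} : n \in \kappa_c\}\subseteq \mathcal{V}_c$. I would then verify that $I(\kappa_c)$ is independent: any two distinct vertices in it correspond to different devices (so conflict type CC1 is ruled out) and, by injectivity of $z(\cdot)$, to different RRBs (so CC2 is ruled out).

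For the reverse direction, given any independent set $I \subseteq \mathcal{V}_c$, the absence of CC1 edges means every device index appears in at most one vertex of $I$, so $\kappa_c(I):=\{n:\exists z,\ v^z_{c,n}\in I\}$ is well defined with $|\kappa_c(I)|=|I|$, and the absence of CC2 edges makes the induced assignment $z(n)$ (the RRB index of the unique vertex of $I$ carrying device $n$) injective. The maps $\kappa_c \mapsto I(\kappa_c)$ and $I \mapsto \kappa_c(I)$ are mutual inverses, establishing the bijection. Under the vertex weight $w(v^z_{c,n})=T_l\frac{Q_nD_n}{f^*_n}+\frac{s}{R^n_{c,z}}$, summing over $I(\kappa_c)$ reproduces exactly the objective of \eref{p3p}, so the MWIS $\kappa^*_c$ of $\mathcal{G}_c$ coincides with the optimal schedule $\kappa^*_c(\mathcal{C})$.

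The one delicate point, and the main obstacle, is the per-device FL-time constraint inherited from C4 (equivalently $\hat{f}_n\le f^{\max}_n$). I would handle it by pre-pruning from $\mathcal{V}_c$ every vertex $v^z_{c,n}$ whose pair $(n,z)$ forces $\hat{f}_n>f^{\max}_n$, since such pairs are infeasible in any schedule regardless of the other choices; equivalently, one may assign those vertices an infinite weight. Either route keeps the MWIS formulation exact and does not alter the bijection. A final remark would be that since a CH is allowed to serve fewer than $|\mathcal{A}_c|$ devices, no external cardinality constraint must be enforced and feasibility reduces entirely to the graph structure of $\mathcal{G}_c$, completing the argument.
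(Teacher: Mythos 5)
Your proposal is correct and follows essentially the same route as the paper, which justifies the corollary only informally by matching the scheduling restrictions (one RRB per device, one device per RRB) to the conflict edges CC1--CC2 and the energy objective of \eref{p3p} to the sum of vertex weights $w(v^z_{c,n})$. Your explicit schedule-to-independent-set bijection and the pruning (or infinite-weighting) of vertices that violate the FL-time feasibility inherited from C4 simply make rigorous what the paper leaves implicit, so no further comparison is needed.
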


\textbf{(2) CH Clustering:}
Essentially, two adjacent clusters are formed, and accordingly, their corresponding CHs can simultaneously transmit the aggregated models to the scheduled local devices if the CHs' coverage zones are overlapped by at least one BD.  This is to ensure that the shared BD can exchange the received aggregated models to the adjacent CHs. 

We construct the whole set of clusters $\mathbf F$ such that: (i) all coverage zones of adjacent clusters $F \in \mathbf F$ are pairwise overlapped by one BD; and (ii) within the same cluster $F$, each subset of local devices is interfering with at least another local device. Mathematically, the clusters in $\mathbf F$ should satisfy: (i) $\mathbf F= \{F  \in \mathcal P(\mathcal N) ~| \mathcal A_c\cap \mathcal A^{tot}(F\backslash c)\neq \emptyset, \forall c\in  F\}$; (ii) $\mathcal A^{tot}(F)\cap \mathcal A^{tot}(F') =1, \forall (F, F') \in \mathbf F$.
Therefore, we have the following lemma for the overlapped clustering construction. 

\begin{lemma}
For any particular set $\mathcal C$, the corresponding overlapped clustering $\mathcal F \in \mathbf F$ satisfies the following constraints
	\begin{subequations}
		\begin{align} \label{eq2a}
		&\bigoplus_{F\in \mathcal F} F= \mathcal C\\ \label{eq2b}
		& \mathcal A^{tot}(F)\cap \mathcal A^{tot}(F')= 1, \forall F\neq F'\in \mathcal F,  \\ \label{eq2c}
		&\mathcal A_c\cap \mathcal A^{tot}(F\backslash c) \neq \emptyset, \forall c \in F, 
		\end{align}
\end{subequations}
where $\mathcal A^{tot}(F)$ is the total coverage zone of all local devices in the cluster $F$ defined as $\mathcal A^{tot}(F)=\cup_{c\in F}\mathcal A_c$. Thus, for any
$\mathcal C$, there exists a clustering $\mathcal F$ satisfying the constraints
\eref{eq2a}, \eref{eq2b}, and \eref{eq2c} simultaneously. In \eref{eq2a}, $\bigoplus$ denotes a partition symbol.
\end{lemma}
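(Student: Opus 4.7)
The plan is to prove Lemma~1 constructively: given the CH set $\mathcal C$, I build a clustering $\mathcal F$ satisfying \eref{eq2a}--\eref{eq2c} by a greedy, step-by-step assignment of CHs to clusters that exploits the geometric structure of the coverage zones of radius $\mathtt R$ and the partial D2D connectivity of the network.

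First, I would introduce an auxiliary overlap graph $\mathcal H$ on $\mathcal C$ in which two CHs $c$ and $c'$ share an edge exactly when $\mathcal A_c \cap \mathcal A_{c'} \neq \emptyset$, i.e., when the D2D topology provides at least one candidate BD between their coverage zones. The clusters in $\mathcal F$ are then generated by a breadth-first sweep on $\mathcal H$: pick an unassigned CH $c_0$, initialize $F=\{c_0\}$, and iteratively append any unassigned neighbour $c\in \mathcal C$ of $F$ in $\mathcal H$ whose coverage overlaps $\mathcal A^{tot}(F)$. This directly enforces the interior connectivity condition \eref{eq2c}, since every CH added to $F$ is by definition a neighbour in $\mathcal H$ of some CH already in $F$.

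To enforce \eref{eq2b}, I would close the current cluster $F$ as soon as a further addition would create a second common device with the total coverage of an already closed cluster. Whenever adjacent clusters $F$ and $F'$ admit more than one candidate BD in $\mathcal A^{tot}(F)\cap \mathcal A^{tot}(F')$, the tie is broken by retaining only the BD nearest to the centroid of the overlap region, in the spirit of the rule stated immediately after problem P0. The unselected candidates are reassigned as ordinary local devices to one of the two CHs and therefore no longer count toward $\mathcal A^{tot}(F)\cap \mathcal A^{tot}(F')$, which collapses the inter-cluster overlap to cardinality one and matches constraint C2 of problem P1. Property \eref{eq2a} follows because the sweep assigns every CH to exactly one $F$, so the resulting $\mathcal F$ is a partition of $\mathcal C$.

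The main obstacle is consistency of the tie-breaking step: one must verify that demoting the surplus BDs to local devices does not invalidate \eref{eq2c} for either $F$ or $F'$. The argument is that each surplus BD lies in some $\mathcal A_c\cap \mathcal A_{c'}$ whose endpoints are already connected through the retained bridge, so its removal from $\mathcal A^{tot}(F)\cap \mathcal A^{tot}(F')$ preserves every edge of $\mathcal H$ that was used during the growth of the two clusters and hence does not disconnect any CH from its cluster. Combined with the finiteness of $\mathcal N$, this guarantees termination of the construction and the simultaneous satisfaction of \eref{eq2a}--\eref{eq2c}, completing the existence proof.
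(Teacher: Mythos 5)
The paper itself does not prove this lemma internally (it is dispatched by citing the analogous Lemma~1 of \cite{D2D7}), so the comparison here is really about whether your construction stands on its own --- and it has a genuine gap at the tie-breaking step. You enforce \eref{eq2b} by ``demoting'' surplus candidate BDs to ordinary local devices and asserting that they then ``no longer count toward $\mathcal A^{tot}(F)\cap \mathcal A^{tot}(F')$.'' But $\mathcal A^{tot}(F)=\cup_{c\in F}\mathcal A_c$ with $\mathcal A_c=\{m\in\mathcal N\,|\,d_{c,m}\leq \mathtt R\}$ is a purely geometric object: which devices lie in the intersection of two clusters' total coverage zones is fixed by the device locations and by which CHs are grouped into $F$ and $F'$, not by the roles (BD versus local device) you assign afterwards. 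Reassigning a device's role cannot remove it from the set $\mathcal A^{tot}(F)\cap \mathcal A^{tot}(F')$, so your mechanism does not ``collapse the inter-cluster overlap to cardinality one''; it only changes the scheduling, which is exactly the looser, practical remedy the paper describes after P0 but which is not what constraint \eref{eq2b} states. To make this step legitimate you would either have to choose the grouping of CHs so that the geometric intersections themselves have cardinality one, or explicitly reinterpret \eref{eq2b} as a statement about selected BDs rather than coverage-zone intersections --- neither of which your argument does.

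Relatedly, the closure rule (``close $F$ as soon as a further addition would create a second common device with an already closed cluster'') is asserted, not verified, to leave a feasible continuation. It can strand a CH whose zone overlaps every admissible cluster in two or more devices, or in none, and it can produce pairs of closed clusters whose total coverages intersect in zero devices (e.g.\ clusters grown in distant parts of the network), which violates the literal ``$=1$'' in \eref{eq2b}; your consistency argument only addresses preservation of \eref{eq2c} under demotion, not these cases. Finiteness of $\mathcal N$ gives termination of the sweep, but termination is not the issue --- simultaneous feasibility of \eref{eq2a}--\eref{eq2c} at the end of the sweep is, and that is precisely what remains unproved.
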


\begin{proof} 
The proof is analogous to \cite[Lemma 1]{D2D7}, and is omitted due to the space limitation.
%We can prove that $\mathcal F$ associated with $\mathcal N_c$ satisfies \eref{eq2a}-\eref{eq2c} by performing a sequential clustering method as follows. Firstly, we generate a cluster $F$ by picking up an arbitrary element $c$ from the set $\mathcal N_c$, and we repetitively add all the remaining devices from $\mathcal{N}_c$ that are interfering with $F$ to it. We stop until there is at least one device remaining in $\mathcal N_c$ that interferes with $F$, which is denoted by $d$. Consequently, constraint \eref{eq2c} holds. It is clear that the remaining devices in $\mathcal N_c$ are interfering with $F$ by a device $d$.  Secondly, we generate a new cluster $F'$ by picking up one of the remaining CHs in $\mathcal N_c$ that is overlapped with CH $c$ by device $d$, then we add all the interfering devices to CH $c$. Notably, the devices in $F$ interfere with the devices in $F'$, and accordingly, $F$ and $F'$
%are interfering by at least one device, i.e., $\mathcal F$ satisfies
%\eref{eq2b}. We repeat this construction process until $\mathcal N_c = \emptyset$ which ensures that constraint \eref{eq2a} holds. As a conclusion, $\mathcal F$ satisfies \eref{eq2a}-\eref{eq2c} for any combination of CHs $\mathcal N_c$. 	
\end{proof}

It is recalled that local devices can only communicate with their respective neighbors over a wireless D2D link whose connectivity can be characterized by an undirected graph $\mathcal G(\mathcal V, \mathcal E)$ with $\mathcal V$ denoting the set of vertices and $\mathcal E$ the set of edges. To design a scheduling scheme meeting the aforementioned clustering properties, we construct the graph $\mathcal G(\mathcal V, \mathcal E)$ as follows. We represent a cluster $F$ by a vertex $V$ in the graph, and two vertices are connected by a scheduling link if the represented clusters are satisfying  \eref{eq2b}. Next, to select the device-CH/RRB scheduling that provides a minimum energy consumption while maximizing the number of scheduled local devices in each cluster and guarantees the FL time constraint, we assign a weight
$w(V)$ to each vertex $V \in \mathcal V$. For notation simplicity, we define the utility
of the $n$-th local device as $E_n=T_l\frac{Q_n D_n}{f^*_n}+\frac{s}{R_{c,z}^n}$ and the utility of cluster $F$ represented by vertex $V$ as $w(V)= \sum_{n\in \kappa^*_c} E_n+E^{com}_c$. Therefore, the following theorem characterizes the
solution to the energy consumption minimization problem P2.1, for a given $f^*_n~ \forall n \in \mathcal N_l$, in a partially connected D2D network.

\begin{theorem}\label{th:1w}
The  solution to the energy consumption minimization problem P2.1 in the partially connected D2D network is equivalent to the weighting-search method, in which the weight of each
vertex $V$ representing cluster $F$ is
\begin{align} \label{eqweight} 
w(V)=\sum_{n\in \kappa^*_c} E_n+E^{com}_c,
\end{align}
where $\kappa^*_c$ is the MWIS problem in the conflict graph of CH $k$ for device scheduling
\begin{align} \label{eqwww}
\kappa^*_c=\arg \min_{\mathtt \kappa_c\in \mathcal G_c}\left(\sum_{n\in \kappa_c}\left( T_l\frac{Q_n D_n}{f^*_n}+\frac{s}{R_{c,z}^n}\right)\right).
\end{align}
\end{theorem}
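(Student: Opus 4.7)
The plan is to prove the equivalence by decomposing P2.1 along its natural inner/outer structure and showing that each layer reduces to a weighted combinatorial selection on the associated graph. First, I would fix an arbitrary candidate set of CHs $\mathcal C$ and isolate the inner minimization over $\kappa_c(\mathcal C)\in \mathcal P(\mathcal A_c)$ in \eref{p3p}. Because constraint C1 forbids reusing the same local device or the same RRB within one cluster, and because $f^*_n$ is given, the inner objective collapses to an additive sum over scheduled device-RRB pairs with per-pair cost $T_l\tfrac{Q_nD_n}{f^*_n}+\tfrac{s}{R^n_{c,z}}$. Applying Corollary~1 directly, this matches the MWIS objective on $\mathcal G_c$ once each vertex $v^z_{c,n}$ is weighted by exactly that cost, so $\kappa^*_c$ is the MWIS of $\mathcal G_c$, establishing \eref{eqwww}.

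Second, I would turn to the outer minimization over $\mathcal C\in\mathcal P(\mathcal N)$. Here I would invoke Lemma~1 to argue that any feasible CH set induces (and is induced by) an overlapped clustering $\mathcal F$ satisfying \eref{eq2a}–\eref{eq2c}: constraint C2 of P1 (pairwise overlap by at most one BD) maps onto \eref{eq2b}, partition of the CH set corresponds to \eref{eq2a}, and \eref{eq2c} captures the requirement that the shared BD actually bridges adjacent clusters. I would then construct the outer graph $\mathcal G(\mathcal V,\mathcal E)$ by creating one vertex $V$ per candidate cluster $F$ and joining two vertices by a scheduling edge iff the underlying clusters satisfy \eref{eq2b}, so that a feasible $\mathcal F$ is exactly a set of pairwise-compatible vertices partitioning $\mathcal C$.

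Third, I would lift the inner optimum into the outer weights. For each cluster $F$ represented by vertex $V$, assign the weight $w(V)=\sum_{n\in \kappa^*_c}E_n+E^{com}_c$, which is precisely the minimum energy attainable in cluster $F$ once the inner MWIS is solved. Because the objective in \eref{p3a} is additive across $c\in\mathcal C$ and the FL-time constraint C4 has been absorbed into MWIS feasibility via the per-vertex cost (using $f_n\geq \hat f_n$ to guarantee C4 for every scheduled device), the total energy for a given clustering equals $\sum_{V\in\mathcal F}w(V)$. Minimizing this sum over pairwise-compatible vertex subsets in $\mathcal G$ is exactly the claimed weighting-search and yields the optimum of P2.1.

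The main obstacle will be rigorously establishing that the inner and outer problems truly decouple, i.e., that the MWIS $\kappa^*_c$ does not depend on which other CHs are chosen alongside $c$. This holds because the per-vertex cost in $\mathcal G_c$ depends only on $(c,n,z)$ and $f^*_n$, while inter-cluster RRB reuse in non-adjacent clusters is already encoded in the conflict-graph construction. Once this separability is in place, the two-level reduction closes and both \eref{eqweight} and \eref{eqwww} follow simultaneously, giving the theorem.
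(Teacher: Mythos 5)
Your proposal is correct and follows essentially the same route as the paper: the inner device/RRB scheduling per cluster is reduced to the MWIS of $\mathcal G_c$ via Corollary~1, the outer CH selection is mapped through Lemma~1 onto the cluster graph $\mathcal G(\mathcal V,\mathcal E)$ with vertex weights $w(V)=\sum_{n\in\kappa^*_c}E_n+E^{com}_c$, and additivity of the objective turns P2.1 into the weighting-search over that graph. In fact the paper offers no formal proof of Theorem~1 beyond this construction and a qualitative justification, so your write-up is the paper's intended argument made explicit (including the separability point that $\kappa^*_c$ depends only on $(c,n,z)$ and $f^*_n$).
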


Essentially, any possible solution to P2.1 using Theorem 1, representing CHs selection and their scheduled local devices, satisfies the following criteria: (i) the represented local devices by vertices in the conflict graph are scheduled to the best CHs/RRBs such that they transmit their local parameters quickly; (ii)	 the represented CH in each vertex is selected, so as it has a good reachability to many local devices in its coverage as well as it is overlapped with the neighboring CHs. This does not only increase the number of selected local devices that can participate in the training process, but also minimizes the number of clusters; and (iii)	 a vertex (representing by a subset of local devices) with a smaller weight (representing a sum of their consumed energy) will have a higher priority to be	selected in participating in the training process to minimize	the objective of P2.1.

\begin{algorithm}[t!]
	\SetAlgoLined
	\KwData{$\mathcal{N}, \mathcal{Z}$, $P_n, H^n_{c,z}$, and $f_n$,  $(n,c,z)\in\mathcal{N}\times\mathcal{N}\times\mathcal{Z}$}
	%		\STATE \textbf{Repeat:}\;
	
	\textbf{Initialize:} $\Gamma^* = \emptyset$ and $\mathcal G=\emptyset$;\;

	Generate a set of clusters $\mathbf F$ according to Section IV-C as follows.\;
	{\begin{itemize}
		\item $\mathbf F= \{F  \in \mathcal P(\mathcal N) ~| \mathcal A_c\cap \mathcal A^{tot}(F\backslash c)\neq \emptyset, \forall c\in  F\}$;\;
		\item $\mathcal A^{tot}(F)\cap \mathcal A^{tot}(F') =1, \forall (F, F') \in \mathbf F$;\;
	\end{itemize}}
	 Construct the graph $\mathcal G$ as explained in IV-C.\;

	\For{$V \in \mathcal G$}{
	    \For{$c=1,\cdots, |F|$}{
		construct the conflict graph $\mathcal G_c(\mathcal V_c, \mathcal E_c)$ for local device scheduling;\;
		$\forall v^z_{c,n} \in \mathcal G_c$, calculate the weight $w(v^z_{c,n})=T_l\frac{Q_n D_n}{f^*_n}+\frac{s}{R_{c,z}^n}$;\;
		solve the MWIS selection problem in Corollary 1 as follows:\; 
		    initialize $\kappa_c= \emptyset$ and		set $\mathcal G_c(\kappa_c) \leftarrow \mathcal G_c$;\;
		\While{$\mathcal G_c(\kappa) \neq \emptyset$}{
				 Select $v^*=\arg\min_{v^z_{c,n}\in \mathcal G_c(\kappa_c)} \{w(v^*)\}$;\;
	             Set $\kappa \leftarrow \kappa_c \cup v^*$ and		 obtain $\mathcal G_c(\kappa_c)$;\;
	                                               }
	Set $\kappa^*_c \leftarrow \kappa_c$;\;
		}
		calculate $w(V)$ using \eref{eqweight};\;
	   }
	Update $\mathcal G(\Gamma^* ) \leftarrow \mathcal G$;\;
	\Repeat{$\mathcal G(\Gamma^* ) = \emptyset$}{
		$V^*=\arg\min_{V\in \mathcal G(\Gamma )} \{w (V)\}$;\; 
			    construct a set $\mathcal G(V^*)$ of vertices connected to 
	  $V$;\;
		set $\Gamma^* \leftarrow \Gamma^* \cup V^*$ and 	set $\mathcal G(\Gamma^* ) \leftarrow \mathcal G(V^*)$;\;
	}
		\KwResult{$\Gamma^*$}

	\caption{CH clustering and local device/BD scheduling over the scheduled RRBs} \label{alg1}
\end{algorithm}

The overall steps of cluster generation, conflict graph construction, and MWIS selection to obtain the CH clustering and device scheduling over the scheduled RRBs are summarized
in Algorithm \ref{alg1}.

%\vspace{-0.55cm}
\subsection{Computation Frequency Allocation: Solution  to Sub-problem P3}\label{S2}
%\vspace{-0.1cm}
For the resulting selected CHs and their scheduled devices from solving P2.1, sub-problem P3  is reduced to the following sub-problem
	\begin{align} \nonumber 
	 \text{P4:}
	&\min_{\substack{f_n}} \sum_{n\in\mathcal N_l} T_lQ_nD_n\alpha f^2_n\\&
	\rm s.t.
	\hspace{0.2cm} \max\left\{f^{\min}_n,\hat f_n\right\} \leq f_n \leq f^{\max}_n, ~\forall n\in \mathcal{N}_l.
	\end{align}
\begin{lemma}
The closed-form solution of sub-problem P4 is obtained as
\begin{equation}
\label{close_form_P5}
\begin{split}
f_n=\begin{cases}
& f^{\min}_n,  ~\text{if} ~\hat f_n \leq f^{\min}_n \\
& \hat f_n , ~\text{if} ~ f^{\min}_n< \hat f_n <  f^{\max}_n\\
&  f^{\max}_n, ~\text{if} ~ \hat f_n  \geq  f^{\max}_n,
\end{cases}
\end{split}
\end{equation}
\end{lemma}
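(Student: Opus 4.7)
The plan is to exploit the separability of P4 across devices. The objective $\sum_{n\in\mathcal N_l} T_l Q_n D_n \alpha f_n^2$ is a sum of terms each depending on a single $f_n$, and the feasible box $\max\{f^{\min}_n,\hat f_n\}\le f_n\le f^{\max}_n$ also decouples across $n$. Therefore P4 splits into $|\mathcal N_l|$ independent scalar problems, and it suffices to solve
\[
\min_{f_n}\; T_l Q_n D_n \alpha f_n^2 \quad \text{s.t.}\quad \max\{f^{\min}_n,\hat f_n\}\le f_n\le f^{\max}_n,
\]
for each $n$ separately.

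The key observation is monotonicity. Since $T_l$, $Q_n$, $D_n$, and $\alpha$ are all strictly positive, the single-variable map $f_n\mapsto T_l Q_n D_n \alpha f_n^2$ is strictly increasing on $\mathbb R_{\ge 0}$. Hence the minimizer of each scalar problem is attained at the \emph{left} endpoint of the feasible interval, namely at $f_n^\star=\max\{f^{\min}_n,\hat f_n\}$, provided this value is no larger than $f^{\max}_n$. This single monotonicity fact drives the entire closed-form.

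From here the proof reduces to a clean case split on the position of $\hat f_n$ relative to $f^{\min}_n$ and $f^{\max}_n$. If $\hat f_n\le f^{\min}_n$, the FL-time-induced lower bound is inactive and the box constraint $f_n\ge f^{\min}_n$ binds, giving $f_n=f^{\min}_n$. If $f^{\min}_n<\hat f_n<f^{\max}_n$, the FL-time bound is active and strictly interior, giving $f_n=\hat f_n$. If $\hat f_n\ge f^{\max}_n$, the required lower bound exceeds the hardware ceiling, so by monotonicity we project onto the nearest feasible point $f_n=f^{\max}_n$.

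There is no serious obstacle: the objective is strictly convex and separable, and the feasible set is a Cartesian product of intervals, so the argument is essentially one line of monotonicity plus a three-branch comparison. The only point requiring a brief remark is the third case, where $\hat f_n\ge f^{\max}_n$ means that constraint C4 of P1 is technically infeasible for device $n$ at the current scheduling; the choice $f_n=f^{\max}_n$ should be interpreted as the boundary solution that triggers re-scheduling in the outer iteration between P2 and P3, rather than a genuine feasible point of P1. Either a direct monotonicity argument or a short KKT verification (with multipliers on the two box constraints and complementary slackness giving exactly the three branches) will formalize this cleanly.
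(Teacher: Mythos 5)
Your proof is correct and follows essentially the same route as the paper's (omitted) argument, which also relies on the objective being monotonically increasing in $f_n\ge 0$ so that the optimum is the clipped left endpoint $f_n=\min\{\max\{f^{\min}_n,\hat f_n\},f^{\max}_n\}$, yielding the three branches. Your additional remark that the branch $\hat f_n\ge f^{\max}_n$ corresponds to an infeasible FL-time constraint and should be read as a projection that triggers re-scheduling in the outer iteration is a useful clarification, but it does not change the substance of the argument.
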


\begin{proof}  The proof of the closed-form procedure is analogous to \cite{C5, Dual1}, and  is omitted due to the space limitation.\ignore{The objective function in P4 is monotonically
increasing with respect to $f_n$ when $f_n \geq 0$ \cite{C5}. To minimize the objective function, $f_n$ should be in the
feasible set of $\left\{\max\left\{f^{\min}_n,\hat f_n\right\}, f^{\max}_n\right\}$. Hence, the closed-form
solution is $f_n=\min\left\{\max\left\{f^{\min}_n,\hat f_n\right\}, f^{\max}_n\right\}$. Then, we repetitively perform the following two closed-form procedures \cite{C5, Dual1}.
\begin{itemize}
	\item If $\frac{T_lC_n D_n}{T_{c,max}-\frac{d_n}{R_{c,z}^{n}}}< f^{\max}_{n}$, the local computation of the $n$-th device is feasible. Thus, we set $f_n=\max\left\{f^{\min}_n, \frac{T_lC_n D_n}{T_{c,max}-\frac{d_n}{R_{c,z}^{n}}}\right\}$.
	\item If $\frac{T_lC_n D_n}{T_{c,max}-\frac{d_n}{R_{c,z}^{n}}} = f^{\max}_{n}$, the local computation of the $n$-th device is feasible. Thus, we set $f_n=f^{\max}_{n}$. 
	
\end{itemize}}
\end{proof}

%\vspace{-0.55cm}
\section{Development and Properties Of The Proposed FL-EOCD Algorithm}\label{A}
%\vspace{-0.1cm}

\subsection{Overview of the FL-EOCD Algorithm}\label{A1}
%\vspace{-0.1cm}
The overall steps followed by the proposed FL-EOCD algorithm to obtain a suitable solution to problem P0 are summarized in Algorithm \ref{alg2}. FL-EOCD has two stages, namely, Stage-I and Stage-II. 

\textbf{(1) Stage-I:} The required steps to implement Stage-I are provided as follows. At the beginning, the BS obtains the
locations of the devices. Thereafter, based on the
current computation frequency allocation, the BS designs the FL local graph of each device and judiciously finds the suitable CH clustering and the local device scheduling. Specifically, in order to determine the CHs and schedule the devices, the BS constructs the clustering and the corresponding graph, and then  performs the greedy vertex-weighting algorithm \cite{MWIS1, MWIS2}.  First, the BS selects the vertex that has the following features: (i) it has a minimum weight in terms of energy consumption and (ii) it is connected to many neighbors that have minimum weights.  Next, it constructs the set of vertices that are connected to the selected vertex, and then finds the second minimum-weight vertex. This process is repeated until no neighbors are available in the designed clustering, and this is summarized in lines 21-25 of Algorithm \ref{alg1}. Meanwhile, the BS calculates the FL time of the scheduled devices according to (17) to check the feasibility of the scheduled local devices. Next, for each scheduled device, the BS updates the variable $f_n$. The aforementioned steps are iteratively repeated until the computation frequency allocation constraints over all the scheduled local devices are satisfied or the maximum number of iterations is reached. Upon convergence in Stage-I (i.e., lines 3-7 of Algorithm \ref{alg2}), the set of CH clustering, device/BD scheduling, and  computation frequency allocation among the scheduled local devices are obtained.

\textbf{(1) Stage-II:} To efficiently disseminate the local cluster aggregated models and perform FL of the CHs obtained by Stage-I, Stage-II is introduced in the FL-EOCD algorithm. In this stage, three main tasks are performed: (i) local learning at the associated devices;  (ii) models combination and forwarding at the BDs; and (iii) cluster model aggregations at the CHs. 
The obtained CHs from Stage-1 transmit their initial models $\mathbf w_{c_i}(t)$ to the associated devices and BDs. Afterword, the associated devices execute the local learning algorithm to update their local optimal models. Meanwhile, the BDs receive the updated models from the CHs and calculate the summation of the received aggregated models.  The CHs, consequently, collect all the local models from the associated devices and the models from the BDs to update their cluster models according to \eref{L_c}. The  aforementioned two stages are repeated until
the maximum number of global iterations is reached. Upon convergence
in Stage-II, the final global model is obtained.

%\textit{Remark 3:} Despite the convergence, due to the approach of updating the CH clustering, device scheduling, and computation frequency allocation variables in Stage-1, the
%final output of FL-ERCG is essentially a local optimal solution
%to problem $\mathcal P_2$. Nevertheless, problem $\mathcal P_2$ is NP-hard and the
%required computational complexity for obtaining the global
%optimal solution cannot be supported by practical dense systems.
%Furthermore, the simulation results depict that the
%performance gap in terms of global convergence between the solution obtained using FL-ERCG
%and the centralized BS solution is
%insignificant. Therefore, FL-ERCG is efficient.

\begin{algorithm}[t!]
	\SetAlgoLined
	 \KwData{Length of training $L$, number of cluster's
	aggregations $T$, number of local
	iterations $T_l$, $K_{max}$, $\mathcal{N}, \mathcal{Z}$, $P_n, H^n_{c,z}$, and $f_n$, $(n,c,z)\in\mathcal{N}\times\mathcal{N}\times\mathcal{Z}$}
	\textbf{Initialize:} The iteration index $t=1, k=1$, $f^{(0)}_n=f^{\min}_n$,  $f^{(k)}_n=f^{\max}_n, \forall n\in \mathcal N$, the initial global model of CH $c_i$ $\mathbf w_{c_i}(t-1)$ and broadcast it among the local devices;\;
	\Repeat{$t=T$}{
		\KwData{Locations of the devices and $H^n_{c,z}$ ~~~~~~~~~~~~~~(Start of Stage-I)}
		\Repeat{$f^{(k)}_n \neq f^{(k-1)}_n ~and~ k< K_{\max}$}{
			by plugging the updated $f^{(k)}_n, \forall n\in \mathcal N_l$ to Algorithm
			\ref{alg1}, update the CH clustering and local device scheduling;
			
			for the updated CH clustring and device scheduling, update the solution $f^{(k)}_n$ of the 
			problem P4 according to Lemma 1;\; 
			update $k=k+1$;
		}
		 \KwResult{$\mathcal C$, $\mathcal N^{c_i}_l$, and $f^{*}_n$, $\forall n \in \mathcal N^{c_i}_l, c_i\in \mathcal C$; ~~~~~~~~~~~~(End of Stage-I)}
		\For{$c_i=1:C$}{
			the $c_i$-th CH broadcasts $\mathbf w_{c_i}(t-1)$ to its scheduled local devices and BDs   (Start of Stage-II);\; 
			each BD $b_{i, {j}}$ receives $\mathbf{w}_{c_i}(t-1)$ and $\mathbf{w}_{c_{j}}(t-1)$ from the adjacent CHs $c_i,c_{j} \in \mathcal B_{b_{i,j}}$, performs the local learning, calculates the summation of the received models, and aggregates the resulting sum with its own model  into  $\mathbf{w}^{ag}_{b_{i,j}}(t)$ using \eref{BD_agg};\;
			\While{$\tilde{t}<T_l$}{
				each device $n \in \mathcal N^{c_i}_l$ performs SGD to compute $L_n(\mathbf{w}_{c_i}(t-1))$ using \eref{L_n};\; 
				each local device $n \in \mathcal N^{c_i}_l$ updates its model as $\mathbf{w}^{\tilde{t}+1}_{c_{i},n}(t)=\mathbf{w}^{\tilde{t}}_{c_i,n}(t-1)-\delta\nabla L_n(\mathbf{w}^{\tilde{t}}_{c_i,n}(t-1))$;\;

			}
			$\forall n\in \mathcal N^{c_i}_l$, device $n$ transmits $\mathbf{w}^*_n(t)$ to the scheduled CH;\;
			the $c_i$-th CH computes  $\mathbf{w}^{ag}_{c_i}(t+1)$ based on \eref{wcag} using $\mathbf{w}^*_{c_i,n}(t), \forall n\in \mathcal N^{c_i}_l$, $\mathbf{w}^{ag}_{b_{i,j}}(t)$, and $\mathbf w^{ag}_{c_i}(t)$;\;
		}
		update $\mathbf w_{c_i}(t-1)=\mathbf w_{c_i}(t)$ and $t=t+1$; ~~~~~~~~~~~~~~~~~~~~~~~~~~~~~~~~~~(End of Stage-II)\;
					}	
				\KwResult{final global model $\mathbf w$}
				
	\caption{FL-EOCD Algorithm} \label{alg2}
	
\end{algorithm}

%\vspace{-0.55cm}
\subsection{Computational and Signaling Cost of FL-EOCD}
%\vspace{-0.1cm}
\textbf{(1) Computational Complexity:} We first analyze the computational complexity of Stage-I of the FL-EOCD algorithm. The required computational complexity of Stage-I is dominated by the complexity of constructing the cooperation clusters. This needs a computational complexity of $\mathcal O(|\mathbf F|)$. Then, checking the neighbors of all these clusters requires a computational complexity of  $\mathcal O(|\mathbf F|^2)$ \cite{D2D6, D2D7}. On the other hand, designing local graphs and finding the corresponding solution requires a computational complexity of $\mathcal O(NN_{avg}+NZ)$, where $N_{avg}$ is the average number of connected devices to any device in the network. Therefore, the overall computational complexity of Stage-I of the FL-EOCD algorithm  is therefore obtained as $\mathcal O(|\mathbf F|^2+NN_{avg}+N Z)$. Thus, FL-EOCD requires a polynomial computational complexity to obtain a converged solution.

\textbf{(2) Signaling Overhead:} %The information exchanges required by the FL-EOCD algorithm are summarized as follows.
At each iteration of the FL-EOCD algorithm, the BS executes Algorithm \ref{alg1} that has a required signaling overhead as follows.  At each iteration of Algorithm \ref{alg1}, devices first transmit their channel state information (CSI) and the indices of devices in their coverage zones to the BS. In this phase, at most $2N$ information exchanges are required.  Subsequently, the BS executes Algorithm \ref{alg1} and transmits the resulting device scheduling and BD selections to the CHs, which needs at most $N_c$ information exchanges.  In this phase, a total of $2N+N_c$ information exchanges are required for Algorithm 1. Until the convergence of Algorithm \ref{alg1},  a total number of the required information exchanges is obtained as $K_{\max}(2N+N_c)$, where $K_{max}$ is the maximum number of iterations for solving P2 and P4.  Each device needs to transmit its model parameter to the scheduled CH, and this requires $N_cN^c_l$ information exchanges for all the clusters. In addition, CHs need to broadcast their global models to the scheduled devices and BDs, which requires $N_c$ information exchanges. Therefore, a total of  $T(K_{\max}(2N+N_c)+N_cN^c_l+N_c)$  information exchanges are required between the devices and the BS during the execution of the FL-EOCD algorithm.

%\vspace{-0.4cm}
\section{Numerical Results}\label{NR}
%\vspace{-0.1cm}
\subsection{Simulation Setting and Schemes Under Consideration}
%\vspace{-0.1cm}
(1) \textbf{Network configuration:} For the numerical performance evaluations, we consider a circular network area having a radius of $900$ m with one BS at its center and multiple randomly distributed devices. The channel model for D2D transmissions follows the standard model \cite{D2D4}, which consists of three components: 1) path-loss of $148+40\log_{10}(\text{dis.[km]})$; 2) log-normal shadowing with $4$ dB standard deviation; and 3) Rayleigh channel fading with zero-mean and unit variance. On the other hand, the path-loss of the cellular is  $128.1+37.6\log_{10}(\text{dis.[km]})$ \cite{RRB1, RRB2}. We consider that the channels are perfectly estimated. The noise power and maximum local device (BD and CH) power are assumed to be $-174$ dBm/Hz and $p_n=p_c=1$ W, respectively. The transmit power of the BS is set to $3$ W. The bandwidth of each RRB is $2$ MHz. Unless otherwise stated, we set the numbers of devices to $26$ and we consider $22$ RRBs in the system. The
remaining simulation parameters that are selected based on \cite{C5, Ansari1} are summarized in Table \ref{table_1}. In the ensuing simulation
results, each simulation experiment is repeated $200$ times with randomly generated devices' locations, and the average
results over all the considered simulation instances are presented. We consider a deep learning image classification problem over MNIST dataset \cite{semi}. The considered dataset contains $60$ K images, where each image is one of $10$ labels. We divide the
dataset into the devices’ local data $\mathcal D_n$ with  non-i.i.d. data heterogeneity, where each local dataset contains datapoints from two of the $10$ labels. In each case, $\mathcal D_n$ is selected randomly from the full dataset of labels assigned to $n$-th device. We also assume non-iid-clustering, where the maximum number of assigned classes for each cluster is $6$ classes. For ML models, we use a deep neural network with $3$ convolutional layers and $1$ fully connected layer. The total number of trainable parameters is $9098$.

(3) \textbf{Schemes Under Consideration:}
To showcase the effectiveness of the proposed FL-EOCD scheme in terms of energy consumption and latency, we consider the following FL benchmark schemes.
\begin{itemize}
	\item \textbf{Star-based FL:}  Here, instead of local cluster aggregations at the CHs, the scheduled devices upload their local models directly to the BS for a global model aggregation. At the BS, an averaging FL scheme is applied to calculate the global model to be transmitted to the scheduled devices for initiating the next iteration of local learning. For a fair comparison, the performance of the centralized FL is optimized by leveraging the FL-EOCD framework. In particular, Algorithm \ref{alg1} is used to obtain the devices-RRBs scheduling and computation frequency allocation to minimize the energy consumption. 
	\item \textbf{Hierarchical FL:} Here, instead of transmitting the local models from scheduled devices directly to the BS, the selected CHs collect the local models from their devices and forward them to the BS for a global aggregation. For a fair comparison, the performance of the hierarchical FL is optimized by leveraging the FL-EOCD framework. In particular, we optimize the CH selection, device scheduling, and computation frequency allocation. 
\end{itemize}

\begin{table}[t!]\small
	\renewcommand{\arraystretch}{0.9}
	\caption{Simulation Parameters}
	\label{table_1}
	\centering
	\begin{tabular}{|p{5.0cm}| p{2.8cm}|}
		\hline
		
		\textbf{Parameter} & \textbf{Value}\\
		\hline 
		Circle radius of device’s service area, $\mathtt R$ & $400$ m\\
		\hline
		Local and cluster aggregated parameters size, $s$ & $9.1$ KB\\
		\hline
		%Device data size, $\mathtt D_n$ & $[0.5-1]$ Mbit\\
		%\hline
		Device processing density, $Q_n$ & $[400-600]$\\
		\hline
		Device computation frequency, $f_{n}$ & $[0.0003-1]$ G cycles/s\\
		\hline
		CPU architecture based parameter, $\alpha$ & $10^{-28}$\\
		\hline
	     FL time threshold $T_{max}$ & $1$ Second \\
	     \hline
	     Number of data samples, $D_n$ & $200$\\
	     \hline
	     %\textcolor{blue}{Loss function & $\gamma=2$ strongly convex and $\beta=4$ smooth}\\
	     %\hline 
	     %\textcolor{blue}{Local learning step size,} $\delta$ & $1/4$\\
	     %\hline 
	\end{tabular}
%\vspace{-1.8em}
\end{table}
%\vspace{-0.4cm}
\subsection{Performance of the Proposed FL-EOCD and Benchmark Schemes for Studying the Accuracy and Number of Iterations for Different Overlapped Clustering}

%In Fig. 9, we plot the convergence rate of the proposed FL-EOCD and benchmark schemes versus the number of iterations for different configurations of number of devices $N$ for a network of $30$ RRBs.

In Fig. \ref{fig4}, we plot the accuracy of the proposed FL-EOCD and benchmark schemes versus the number of global iterations for a network of $22$ RRBs and $26$ devices which are grouped into $3$ overlapped clusters. In the considered star-based and hierarchical FL schemes, the BS can receive  the local trained models of the devices, where each scheduled device transmits its trained model directly to the BS (in case of star-based FL) or through CHs (in case of hierarchical FL). As a result, the BS can aggregate all the local models and reaches accuracy of $88\%$ and $86\%$ with around $20$ global iterations for star-based and hierarchical FL schemes, respectively. In the considered decentralized system, at the beginning, each selected CH has the knowledge of the cluster model that depends on its scheduled devices in the corresponding cluster. Consequently, FL-EOCD has a low accuracy of $82\%$ at $20$ iterations compared to the aforementioned schemes. Thus, our proposed scheme has a certain delay of reaching the same accuracy of the other schemes for a few number of iterations. However, using the efficient overlapped clustering and aggregated models dissemination between CHs, the proposed FL-EOCD algorithm efficiently disseminates the aggregated models of CHs to their adjacent clusters. Consequently, with increasing the number of iterations, the CHs have better knowledge about the global model of the system. Thus, the accuracy of our proposed FL-EOCD scheme is quickly increased with the number of iterations. Essentially, when the number of global iterations in the range of $150-200$, the CHs in the proposed scheme have almost the same global model of the BS of the star-based and hierarchical FL schemes. Therefore, as depicted from Fig.  \ref{fig4}, all the considered schemes have almost the same number of global iterations upon reaching the accuracy of $96\%$. 

In Fig.  \ref{fig5}, we further study the impact of the number of overlapped clusters on the convergence rate of our proposed FL-EOCD scheme compared to the benchmark schemes for a network of  $22$ RRBs and $26$ devices which are clustered into $4$ overlapped clusters. From Fig. \ref{fig5}, when the number of iterations is $40$, FL-EOCD reaches an accuracy of $88\%$ for $4$ overlapped clusters as opposed to $90\%$ for $3$ overlapped clusters as depicted in Fig. \ref{fig4}. Thus, the performance gain of FL-EOCD is not degraded
for a few number of iterations with relatively large number of clusters, thanks to the efficient overlapped clustering and models' dissemination achieved by FL-EOCD. Again, for a large number of global iterations, FL-EOCD achieves the same accuracy as the FL benchmark schemes without the need for global aggregations at the BS.

\begin{figure}[t!]
		\centerline{\includegraphics[width=0.85\linewidth]{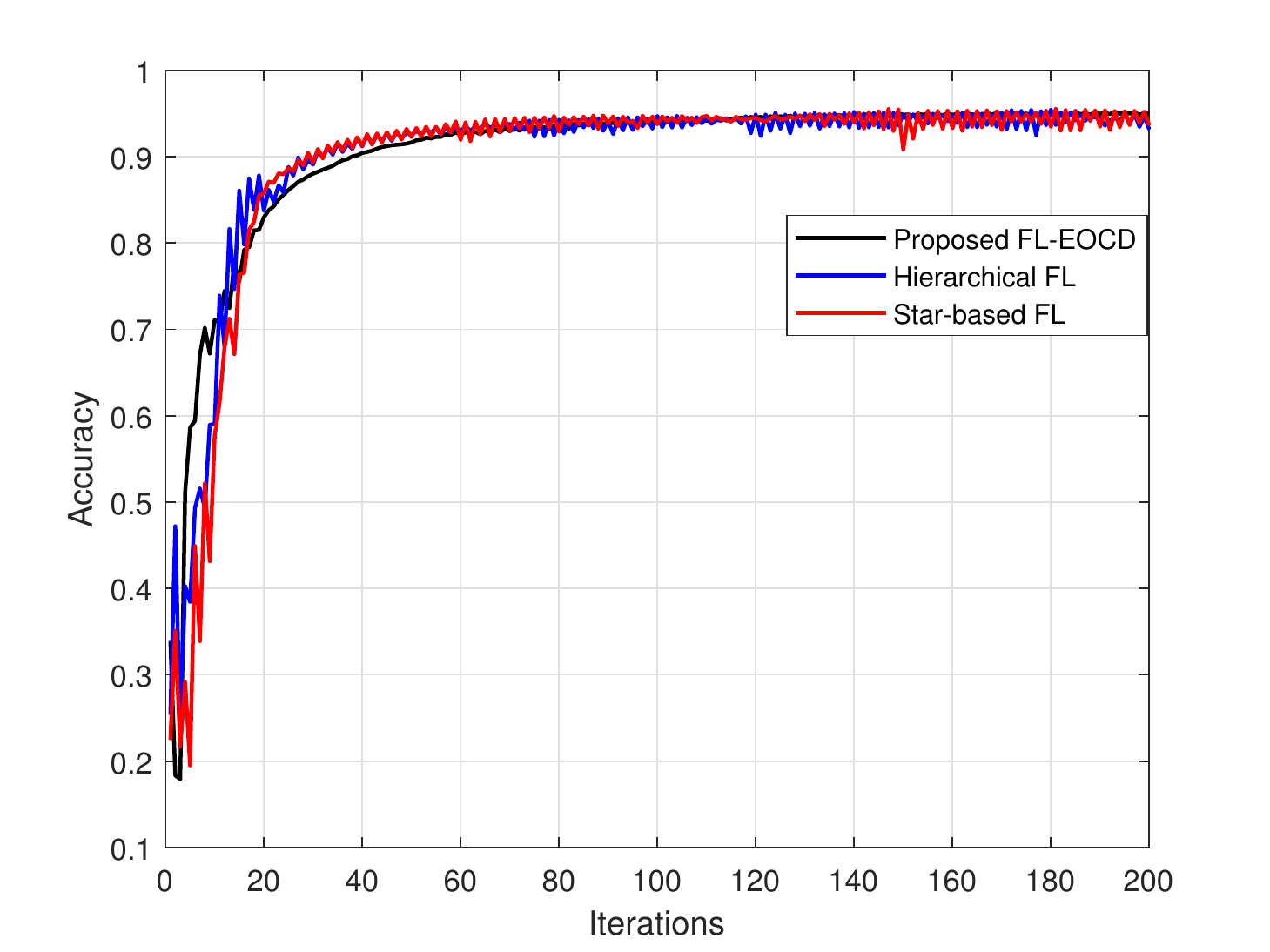}}
		
		\caption{Accuracy vs. number of global iterations for a network of $26$ devices and $22$ RRBs for the configuration of $3$ overlapped clusters.}
		\label{fig4}
\end{figure}

\begin{figure}[t!]
	\centerline{\includegraphics[width=0.85\linewidth]{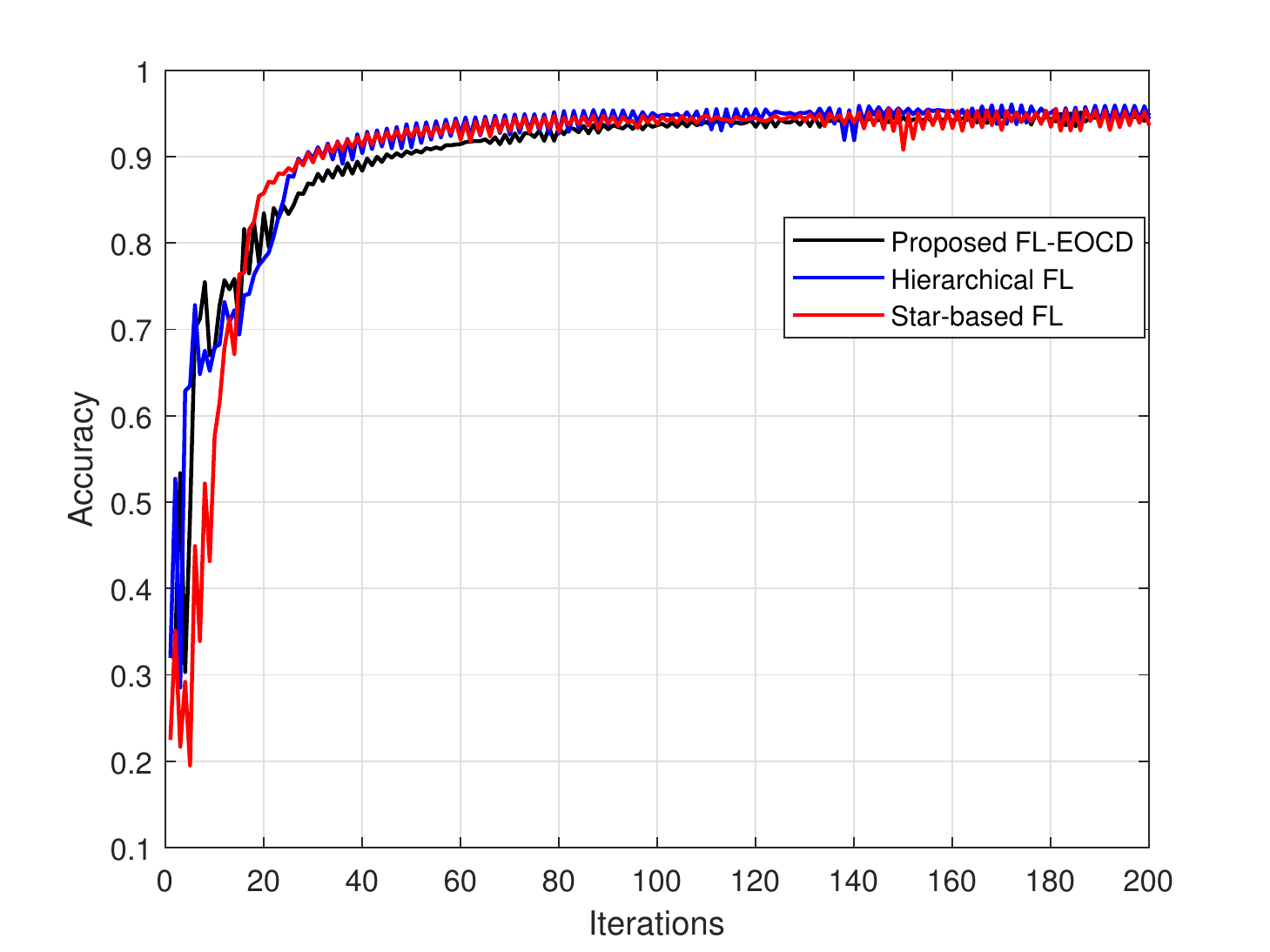}}
	
	\caption{Accuracy vs. number of global iterations for a network of $26$ devices and $22$ RRBs for the configuration of $4$ overlapped clusters.}
	\label{fig5}
\end{figure}

\begin{figure}[t!]
	\centerline{\includegraphics[width=0.85\linewidth]{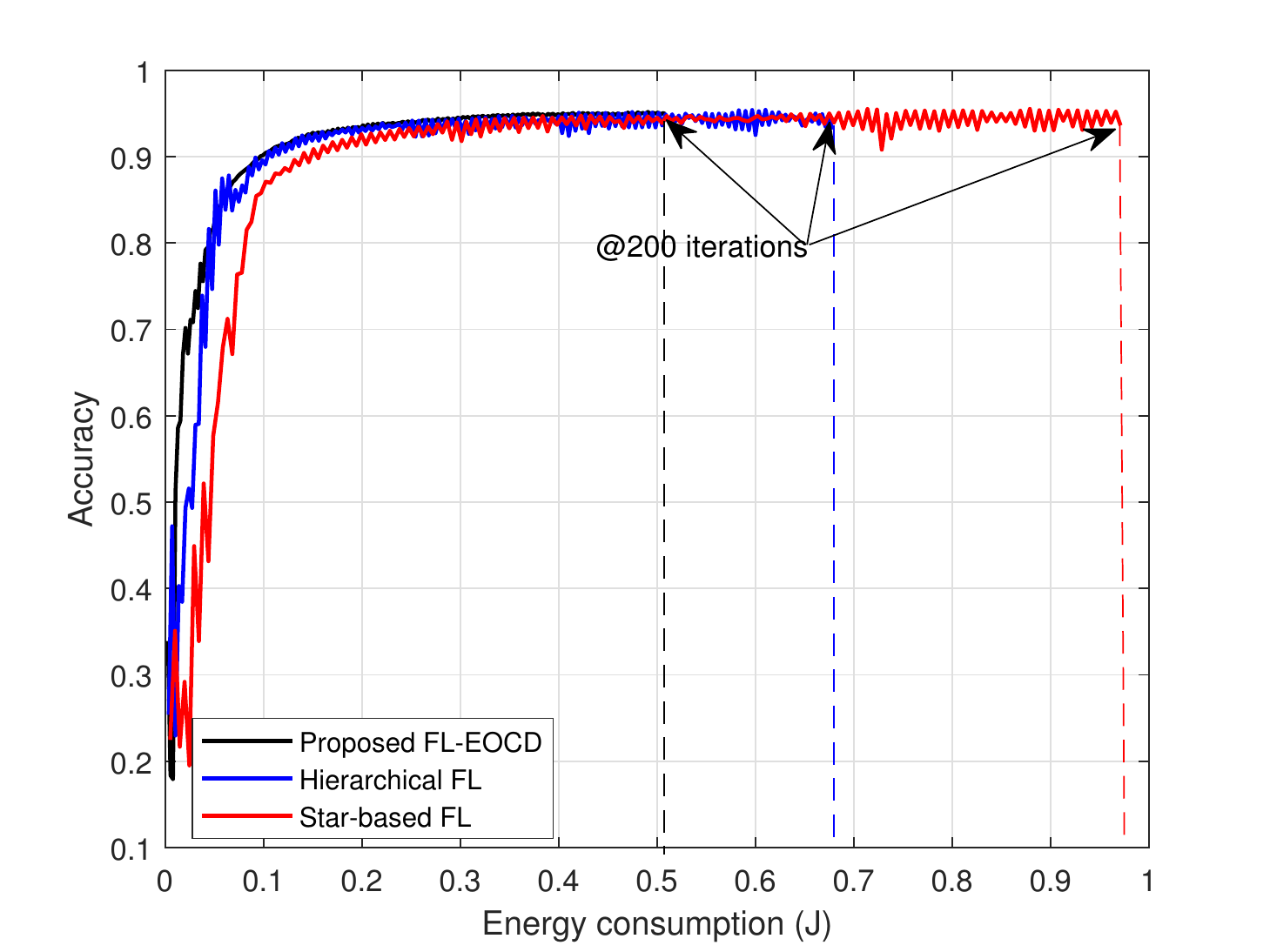}}
	
	\caption{Accuracy vs. energy consumption for a network of $26$ devices and $22$ RRBs.}
	\label{fig6}
\end{figure}

\begin{figure}[t!]
	\centerline{\includegraphics[width=0.85\linewidth]{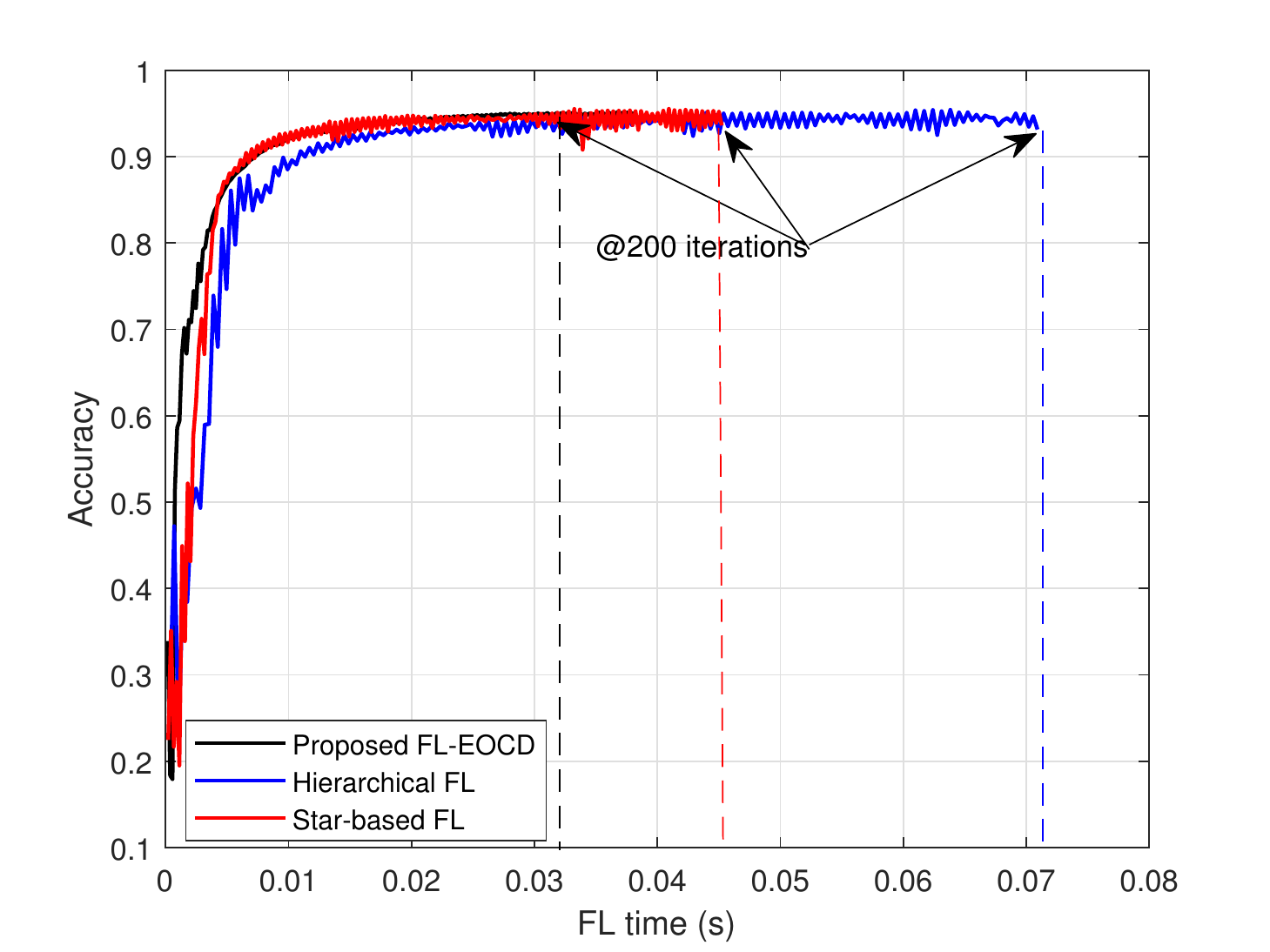}}
	
	\caption{Accuracy vs. FL time for a network of $26$ devices and $22$ RRBs.}
	\label{fig7}
\end{figure}

%\vspace{-0.3cm}
\subsection{Performance of the Proposed FL-EOCD and Benchmark Schemes in terms of Energy Consumption and FL Time}
%\vspace{-0.25cm}
In Fig. \ref{fig6}, we evaluate the accuracy of the proposed FL-EOCD and benchmark schemes versus the required consumed energy in a network of $26$ devices and $22$ RRBs. From Fig. \ref{fig4}, we observe that for same accuracy and number of global iterations, the energy consumption of the proposed FL-EOCD is reduced. Such
an observation can be explained by the following argument. In the proposed FL-EOCD scheme, we judiciously select the CHs such that they have good reachability to many close-by local devices. Thus, the consumed energy for wireless transmissions from devices/BDs to CHs and then from CHs back to devices/BDs is low. However, in the considered star-based FL, devices consume most of the energy for transmitting their trained local models to the distant BS. Consequently, a degradation of the energy consumption is observed at $200$ iterations when the star-based FL converges. It is recalled that in the hierarchical FL,
unlike the proposed FL-EOCD,  two-hop  transmission is required which consumes more energy. As a result, FL-EOCD achieves an improved
energy consumption compared to the considered hierarchical FL scheme. Fig. \ref{fig4} demonstrates
that FL-EOCD requires energy of $0.51$ J compared to the consumed energy of the considered star-based FL and hierarchical schemes of $0.98$ J and $0.69$ J, respectively, for $200$ global iterations. Furthermore, since FL-EOCD requires polynomial computational complexity and has an improved performance in terms of energy consumption, it is advantageous compared to the benchmark FL schemes.

In Fig. \ref{fig7}, we evaluate the accuracy of the proposed FL-EOCD and benchmark schemes versus the FL time in a network of $26$ devices and $22$ RRBs. First, it is clear that the FL time depends on the wireless transmission time and the computation learning time of devices. Since all the simulated schemes consider local learning at the devices and optimize the frequency allocation computation, their local learning time is almost the same. Thus, the key factor in the FL time is  the communication time of transmitting the trained local models to the CHs/BS plus the transmitting time of sending the aggregated models from CHs/BS back to the devices. Consequently, the FL time is dominated by the wireless transmission time. As can be seen from Fig. \ref{fig5}, our proposed FL-ECD scheme, that implements simultaneous and short D2D communications, effectively minimizes the FL time with the same accuracy of the considered benchmark schemes. In particular, it can be observed that the FL time of FL-EOCD at $200$ iterations is $0.031$ second while the considered star-based and hierarchical schemes have FL times of $0.045$ second  and $0.071$ second at the same number of iterations, respectively. This is because the FL time is mainly controlled by the longest transmission time of one device, named as a straggler device, which comes in favor of our proposed FL-EOCD scheme, thanks to the clustering optimization, CH-device scheduling, computation frequency allocation, and D2D communications.

Figs. \ref{fig6} and \ref{fig7} interestingly depict that the star-based FL has a poor performance in terms of energy consumption and improved FL time compared to the  hierarchical FL. This argument is due to the fact that hierarchical FL has good and short connectivity among devices which consumes less energy, however it needs two-hop transmission that degrades the FL latency. Our proposed FL-EOCD strikes a balance between these benchmark schemes by (i) augmenting the need for two hops of transmission (unlike hierarchical FL); and (ii) ensuring short-range of communications between CHs and devices/BDs (unlike the star-based FL that requires a distant BS).

\begin{figure}[t!]
	\centerline{\includegraphics[width=0.85\linewidth]{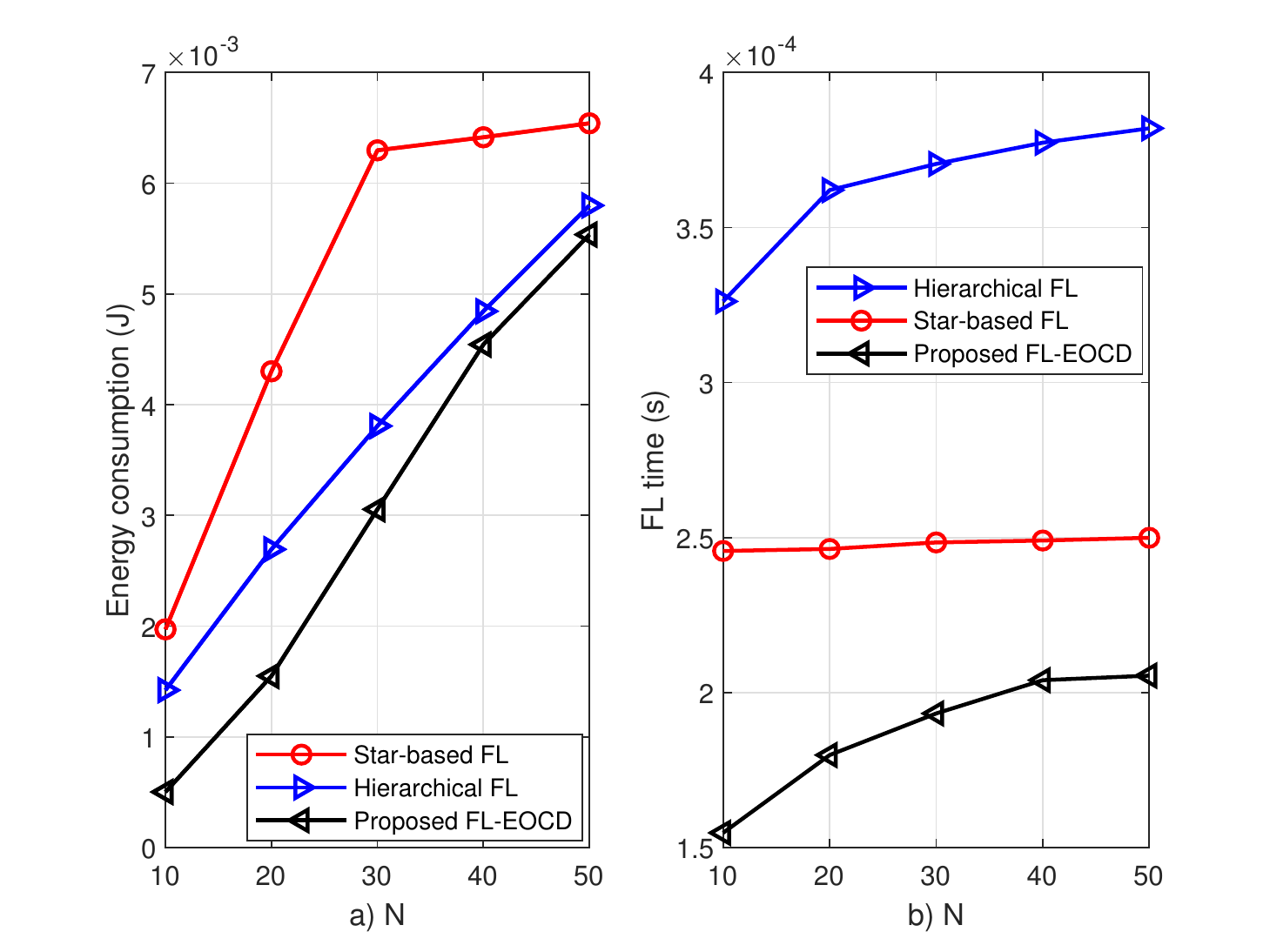}}
	
	\caption{a) Energy consumption and b) FL time per one global iteration vs. number of devices $N$ for a network of $30$ RRBs}
	\label{fig8}
\end{figure}

\begin{figure}[t!]
	\centerline{\includegraphics[width=0.85\linewidth]{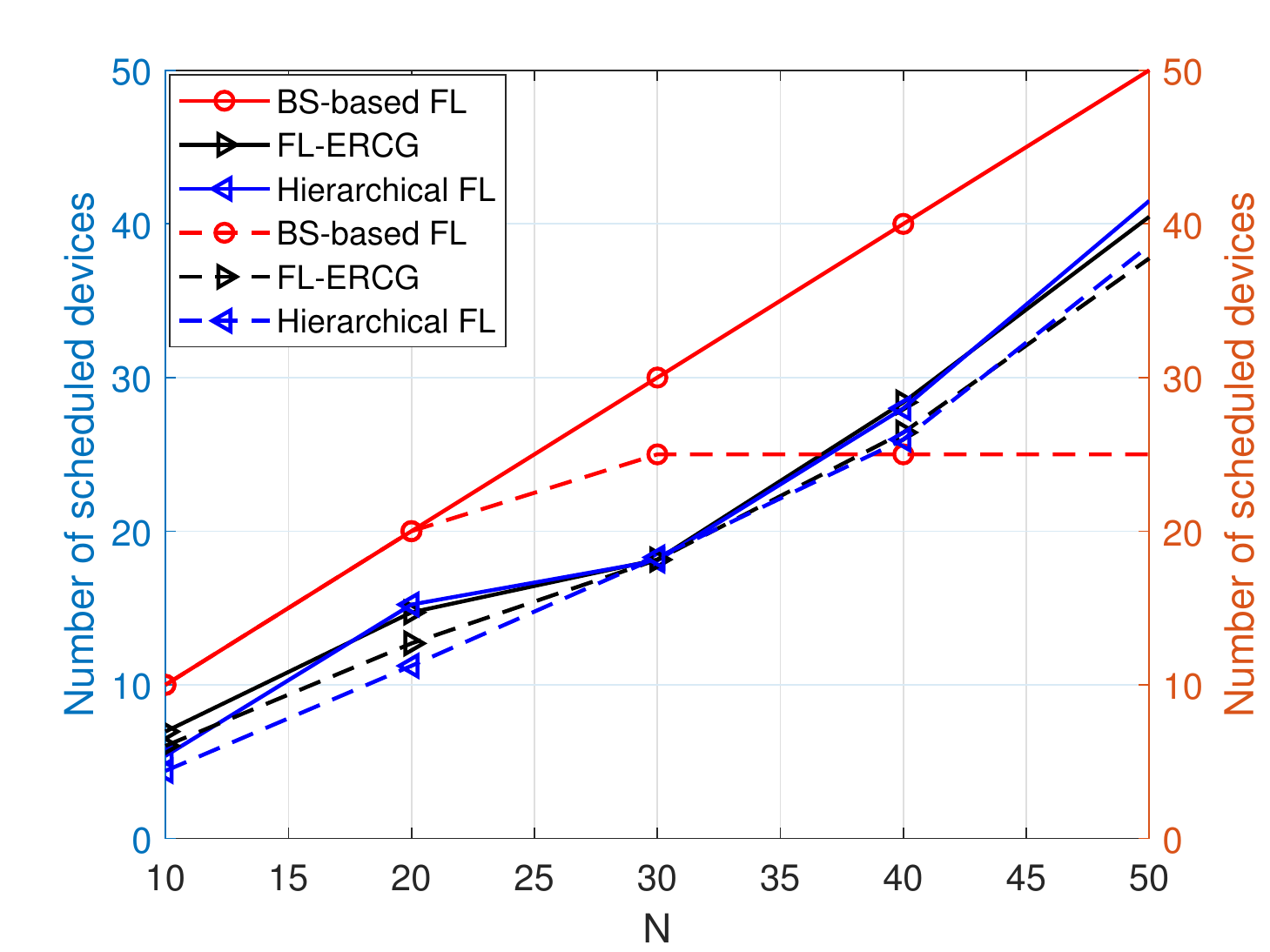}}
	
	\caption{Number of scheduled devices per one global iteration
		vs. number of devices $N$ for a network of $50$ RRBs (solid lines) and $25$ RRBs (dash lines).}
	\label{fig9}
\end{figure}

\subsection{Performance of the Proposed FL-EOCD and Benchmark Schemes for Different Number of Devices and RRBs}
In Figs. \ref{fig8}(a) and  \ref{fig8}(b), we evaluate the energy consumption and FL time for one global iteration, respectively, of the proposed FL-EOCD and benchmark schemes by varying the number of devices in a network of $30$ RRBs. From Fig. \ref{fig8}, we observe that FL-EOCD outperforms the benchmark schemes in terms of energy consumption and FL time. For instance, Fig. \ref{fig8}(a) shows that FL-EOCD achieves
$0.0027$ J and $0.0013$ J lower energy consumption, respectively, than
the considered star-based and hierarchical schemes for $20$ devices. It is noteworthy that when $N$ increases, the consumed energy of all the schemes increases. However, due to the D2D short communications and efficient overlapped clustering of our proposed FL-EOCD scheme, the power consumption degradation for increasing $N$ is insignificant, thanks to the joint clustering and device scheduling achieved  by our proposed scheme for the increased number of devices. Indeed, Fig. \ref{fig8}(a) shows that FL-EOCD achieves $0.0032$ J  lower energy consumption compared the considered star-based scheme for $30$ devices. Similarly, Fig. \ref{fig8}(b) shows that FL-EOCD achieves $0.17$ ms and $0.05$ ms shorter FL time than the considered  hierarchical and star-based schemes for $50$ devices, respectively. It is noted that the star-based FL can support at a maximum of $30$ devices that is equal to the number of available RRBs in the system. Therefore, the energy consumption and FL time performances of the star-based FL do not grow up with increasing the number of devices ($N>30$). Based on Fig. \ref{fig8}(a) and Fig. \ref{fig8}(b), we conclude that the proposed FL-EOCD
algorithm is efficient in terms of energy consumption and FL time for both small and large numbers of devices.

%It is noteworthy that when $N$ increases, the consumed energy of all the schemes increases. However, due to the D2D short communications and novel overlapped clustering of our proposed scheme, the power consumption degradation for increasing $N$ is not significant. Thus, the performance gain of FL-ECD is enhanced for a large numbers of devices, thanks to the joint clustering and devices scheduling achieved  by our proposed scheme for the increased number of devices. Indeed, Fig. \ref{fig6}-a shows that FL-ECD achieves $0.0140$ J and $0.001$ J lower energy consumption, respectively, than the considered BS-based and hierarchical schemes for $50$ devices. Similarly, Fig. \ref{fig6}-b shows that FL-ECD achieves $15$ ms and $5$ ms lower FL time
%than the considered schemes for $50$ devices, respectively. Based on Fig. \ref{fig6}-a and Fig. \ref{fig6}-b, we conclude that the proposed FL-ECD
%algorithm is efficient in terms of energy consumption and FL time for both small and large numbers of devices.

In Fig. \ref{fig9}, we evaluate the number of scheduled devices of the proposed FL-EOCD and benchmark schemes by changing the number of devices in a network of $50$ RRBs (solid lines) and $25$ RRBs (dash lines). From Fig. \ref{fig9}, we observe that for both FL-EOCD and hierarchical schemes, the number of scheduled devices
is degraded as compared to the considered star-based FL scheme. Both schemes have a certain degradation, around $7\%$, compared to the star-based scheme where the number of scheduled devices grows relatively fast. This is becasue in the considered star-based FL scheme, the BS can access all the devices, and thus it can schedule all the devices in the network, given the number of RRBs in the network. On the other hand, in FL-EOCD scheme, the selected overlapped CHs have limited coverage range and cannot schedule all the devices in the network. Following this, the remaining devices are unscheduled. Consequently, a degradation
in the number of scheduled devices is observed in case of large numbers of devices in the network. 

Fig. \ref{fig9} also illustrates that the proposed FL-EOCD algorithm and hierarchical scheme outperform the considered star-based scheme when the number of RRBs is less than the number of devices. In particular, when $Z$ is nearly $25$,
the effective system capacity of the star-based scheme stops growing and can have at most $25$
scheduled devices. Therefore, the star-based scheme is impractical in terms of the number of scheduled devices, particularly for massive networks with limited radio resources. However, in the considered D2D schemes (i.e., FL-EOCD and hierarchical), the set of RRBs can be re-used among non-adjacent clusters. As a result, the number of scheduled devices of the considered D2D schemes is increased. Indeed, Fig. \ref{fig9} (dash lines) demonstrates
that D2D schemes achieve more than $25\%$ improvement in the
number of scheduled devices compared to the star-based FL
for $50$ devices and $25$ RRBs in the network.

\section{Concluding Remarks and Future Works} \label{CN}
%\vspace{-0.1cm}
In this paper, we have introduced a new FL-EOCD scheme, that leverages D2D communications and overlapped clustering without the need for a global aggregator. %For this objective, first we have  mathematically and numerically proofed the convergence rate of our developed FL-ERCG scheme. %Our conducted numerical results showed that the proposed FL-ERCG scheme reaches almost the same performance of the BS-based scheme. 
In order to study the performance of the  developed FL-EOCD scheme, second we have investigated the resource allocation strategy to minimize the energy consumption in a partially connected D2D network subject to FL time constraint. Our proposed resource allocation strategy jointly optimizes the clustering selection, CH-device scheduling, and frequency computation allocation. The presented numerical results revealed that the proposed FL-EOCD scheme substantially reduces the energy	consumption and FL time compared to the baseline schemes. In particular, simulation results showed that	the proposed FL-EOCD scheme can effectively				improve the energy consumption by around $35\%$ and $55\%$, and FL time by around $50\%$ and $30\%$, respectively, compared to  the hierarchical and  star-based schemes. For the required accuracy of $80\%$, our proposed FL-EOCD scheme has a certain degradation in the number of global iterations as compared to the benchmark schemes. This small degradation in some numerical results, roughly in the range of
$3$-$5$ global iterations, comes at the achieved: (i) low energy consumption and (ii) reduced FL time without using a global aggregator as compared to the benchmark schemes.

In this paper, due to the space limitation, we have numerically proved the convergence rate of our proposed FL-EOCD scheme, and the mathematical analysis of the convergence rate is left for a future work.  As an extension, it is also possible to alleviate the existence of a global coordinator (e.g., BS) by developing a game theoretical approach.
%\vspace{-0.4cm}

%\end{spacing}

\end{document}